\newcommand{\kibitz}[2]{\ifnum\Comments=1\textcolor{#1}{#2}\newline\fi}
\title{How Best to Handle a Dicey Situation}
\author{Rishab Nithyanand \and Jonathan Toohill \and Rob Johnson}
\institute{Department of Computer Science\\Stony Brook University\\
\email{\{rnithyanand, rob\}@cs.stonybrook.edu, jonathan.toohill@stonybrook.edu}}
\begin{document}
\maketitle              

\begin{abstract}
We introduce the {Destructive Object Handling} (DOH) problem, which models aspects of many real-world allocation problems, such as shipping explosive munitions, scheduling processes
in a cluster with fragile nodes, re-using passwords across multiple
websites, and quarantining patients during a disease outbreak.  In
these problems, objects must be assigned to handlers, but each object
has a probability of destroying itself and all the other objects
allocated to the same handler.  The goal is to maximize the expected
value of the objects handled successfully.


We show that finding the optimal allocation is
$\mathsf{NP}$-$\mathsf{complete}$, even if all the handlers are
identical. We present an FPTAS when the number of handlers is
constant.  We note in passing that the same technique also yields a first
FPTAS for the weapons-target allocation problem~\cite{manne_wta} with
a constant number of targets.  We study the structure of DOH problems
and find that they have a sort of phase transition -- in some instances
it is better to spread risk evenly among the handlers, in others, one
handler should be used as a ``sacrificial lamb''.  We show that the
problem is solvable in polynomial time if the destruction
probabilities depend only on the handler to which an object is
assigned; if all the handlers are identical and the objects all have
the same value; or if each handler can be assigned at most one object.
Finally, we empirically evaluate several heuristics based on a combination of greedy and genetic algorithms. The proposed heuristics return fairly high quality solutions to very large problem instances (upto 250 objects and 100 handlers) in tens of seconds.
\end{abstract}

\section{Introduction}\label{sec:introduction}
Consider the problem of transporting multiple types of explosives
across a war-zone when multiple options of transportation are
available (eg., armoured trucks, naval carriers, submarines, transport
aircrafts, etc). Each transportation option has a non-negligible
probability of being destroyed by enemy combatants or by the
mishandling of (possibly faulty) explosives. The goal is to find an
allocation of explosives to transportation units such that expected
value of delivered explosives is maximized. We classify such
allocation problems as {Destructive Object Handling} ({DOH})
problems.
 
Formally, a {DOH} problem consists of a set of $n$ objects that
must be allocated to $k$ handlers.  For $j=1,\dots, n$, the $j^{th}$
object has value $v_j$ and a vector of non-self-destruction
probabilities $\langle p_{1j}, \dots, p_{kj} \rangle$. Here, $p_{ij}$
denotes the probability that the $j^{th}$ object will \emph{not}
self-destruct if allocated to the $i^{th}$ handler.  All the objects
allocated to a handler survive if none of them self-destruct,
otherwise they are all lost.  Self-destruction probabilities are
assumed to be independent.  The goal is to find an allocation that
maximizes the expected value of surviving objects.

\begin{problem}\textit{The Generic Destructive Object Handling ({DOH}) problem.}
  Given $v_1,\dots,v_n$ and $p_{ij}$ for $i=1,\dots k$ and
  $j=1\dots,n$, find partition $S_1\cup\dots\cup S_k$ of
  $\{1,\dots,n\}$ that maximizes $ES=\sum_{i=1}^kES_i$,
  where $ES_i=\sum_{j\in S_i}v_j\prod_{j\in S_i} p_{ij}$.
\label{problem:DOH}
\end{problem}
 We will use the notation
$V_i=\sum_{j\in S_i}v_j$ and $P_i= \prod_{j\in S_i} p_{ij}$, so
$ES=\sum_{i=1}^kES_i=\sum_{i=1}^kP_iV_i$.
Note that the DOH problem can also model values for handlers and
allocation-independent destruction probabilities by creating proxy
objects corresponding to each handler. In addition to hazardous object shipping, {DOH} problems arise in many other real world contexts.  

\emph{Emergency quarantine.}  During an outbreak of a disease 
with a short incubation period, medical staff may need to quarantine
patients collectively, i.e. patients may be sealed in
multiple rooms for the incubation duration.  If no one in a room is
infected, then all the patients in that room survive and are released.
If any one of the patients in a room is sick, then all the
patients in that room are, sadly, lost. Given an initial
estimate of the probability that a patient is sick based on physical
observations, such as fever, cuts, bite marks, etc., how should
staff assign patients to rooms to maximize the expected number
survival rate?  

\emph{Process allocation with marginal nodes.} Large computing 
clusters may have nodes that are marginal, i.e.  they sometimes crash
due to faulty RAM, overheating, etc., and the probability that
a node crashes may depend on the number and kind of processes
allocated it.  Assuming no checkpointing, if a node
crashes, then all the work it has performed is lost. Given a value
for each task, and probabilities $q_{ij}$ that task $j$ will crash
machine $i$, how should processes be allocated to machines so as to
maximize the expected value of completed tasks?


\emph{Re-using passwords.}  Recent studies show that
humans, on average, distribute six passwords amongst 25 web service
accounts~\cite{web_passwords_1}.  If an attacker manages to steal the
password database for one of these services~\cite{real-breakins} and
recovers the user's password for that service, then, due to the strong
linkability \cite{web_passwords_2} of web based accounts, he can
easily break into all the other accounts that share that password.  If
the user's $i^{th}$ password is assigned to her $j^{th}$ account, then the
probability, $q_{ij}$, that an attacker obtains the $i^{th}$ password by
breaking into the $j^{th}$ web service is a function of (1) the security
of the $j^{th}$ web service, and (2) the strength of the $i^{th}$ the
password (but only if the $j^{th}$ service stores password hashes instead
of passwords~\cite{no-hash}).  Thus, given estimates of the value of
each account, the security of each service provider, the strength of
each password, and whether each service provider stores passwords or
hashes, we can ask: how should a user allocate accounts to passwords
so as to minimize her expected loss from server breakins?

\begin{table}[t]
\centering
\begin{tabular}{ |m{4.4cm}|>{\centering\arraybackslash}m{1.9cm}|>{\centering\arraybackslash}m{2.1cm}|>{\centering\arraybackslash}m{3.2cm}|}
  \hline
                                                                                                                                  &                                              & \multicolumn{2}{c|}{\textbf{Approximation Results}}                                   \\
  \cline{3-4}
  \centering{\textbf{Problems}}                                                                                                   & \textbf{Class}                               & \emph{Factor}               & \emph{Time Complexity}                                  \\
  \cline{1-4}
  General DOH problem (Sections~\ref{sec:complexity}, \ref{subsec:approx:general})                                                 & at least $weakly$ $\mathsf{NP}$-$\mathsf{C}$ & $\frac{(1-\epsilon)(c-1)}{k}$       & $O(\frac{k^cn^{2c+1}\ln^c P \ln^c V}{\epsilon^{2c}})$ \\
  \cline{1-4}
  DOH with $k=O(1)$ (Sections~\ref{sec:complexity}, \ref{sec:approximations})                                                           & $weakly$ $\mathsf{NP}$-$\mathsf{C}$          & $1-\epsilon$                & $O(\frac{n^{2k+1} \ln ^{k}V\ln ^{k}P}{\epsilon^{2k}})$  \\
  \cline{1-4}
  Identical Handlers  (Section~\ref{sec:complexity})                                                   & at least $weakly$ $\mathsf{NP}$-$\mathsf{C}$ & Same as General DOH & Same as General DOH \\
  \cline{1-4}
  Identical Values                                                                                  & \emph{Open}                                  & Same as General DOH & Same as General DOH \\
  \cline{1-4}
  Identical Objects (Section \ref{subsec:special_cases:identical_objects})                     & $\mathsf{P}$                                 & 1                           & $O(nk)$                                                 \\
  \cline{1-4}
  Identical Handlers and Values (Section~\ref{subsec:special_cases:identical_handlers_values}) & $\mathsf{P}$                                 & 1                           & $O(n^2k)$                                               \\
  \cline{1-4}
  Identical Risks (Section~\ref{subsec:special_cases:identical_risks})                                & $\mathsf{P}$                                 & 1                           & $O(n^2k)$                                               \\
  \cline{1-4}
  At most one object per handler (Section \ref{subsec:special_cases:EOVA})                                                        & $\mathsf{P}$                                 & 1                           & $O\left(n^3k \log n\right)$                             \\
  \hline
  \hline
  \cline{1-4}
\end{tabular}
\caption{Summary of Complexity Results.
  Here, $k$ is the number of handlers, $n$ is the number of objects, $P=\min(\prod_{j = 1}^n p_{1j},
  \dots, \prod_{j = 1}^n p_{kj})$, and $V=\min_j v_j/\sum_{j=1}^nv_j$.
  \vspace{-.3in}}
\label{tab:summary}
\end{table}

\subsection{Contributions and Organization}
A summary of theoretical results presented in this paper is shown in
Table~\ref{tab:summary}.  In Section~\ref{sec:complexity}, we show
that the DOH problem is NP-complete, even if all the handlers are
identical.  We provide an FPTAS for the DOH problem
when the number of handlers is a constant in Section \ref{sec:approximations}. Coincidentally, the same technique also yields the first FPTAS for the Weapons-Target Allocation problem with a constant
number of targets.  In
Section~\ref{sec:special_cases}, we provide polynomial time algorithms
for the following special cases: when the survival probabilities
depend only on the handler, when all objects have the same values and
all handlers are identical, when all the objects are identical, and
when at most one object can be allocated to each handler.  Heuristic
approaches for finding allocations are proposed and compared in
section \ref{sec:heuristics}. Finally, we summarize our conclusions.

\subsection{Related Work}\label{subsec:introduction:related}
In terms of similarity of objective functions, the {DOH} problem
is most closely related to the {static Weapon-Target Allocation
  problem (WTA)}. The {WTA} problem \cite{manne_wta} is a well studied non-linear allocation problem in
the field of command-and-control theory in which the objective is to
allocate missiles to enemy locations so as to inflict maximum damage
(given that each missile $j$ destroys a target $i$ with probability
$p_{ij}$). The problem was shown to be at least
{weakly} $\mathsf{NP}$-$\mathsf{Complete}$ by Lloyd and
Witsenhausen \cite{Witsenhausen1986WTA} \footnote{To date it remains unknown if
the problem is {strongly} $\mathsf{NP}$-$\mathsf{Complete}$ as
there exist no known strong reductions, pseudo-polynomial time algorithms, or
{FPTAS}.}. We show in this paper that, when the number of targets
is constant, the problem admits an FPTAS. 
While certain similarities exist between the {WTA} objective
function ($F_{wta} = \sum_{i = 1}^k \prod_{j \in S_i}
p_{ij}$)\footnote{$S_i$ denotes the set of missiles allocated to the $i^{th}$ target.} and the {DOH} objective function
(Problem \ref{problem:DOH}), a major complicating difference is the absence
of an additive sub-component to $F_{wta}$. As a result, unlike the
{WTA} problem, the objective function of the {DOH} problem
does not yield a convex function even after relaxation of the integer
requirement.

The {DOH} problem has many applications in the domain of
{hazardous material processing and routing}~\cite{hazmat}. These
problems generally deal with the selection of {minimum risk}
processing locations \cite{hazmat_fl} and transportation routes in
networks \cite{hazmat_1}, \cite{hazmat_2} -- i.e., finding minimum
cost facility locations and routes for which human and external
material loss in the event of malfunction incidents is minimized.

\section{Complexity of {DOH} Problems}\label{sec:complexity}
In this section, we show that the DOH problem is at least weakly
$\mathsf{NP}$-$\mathsf{Complete}$.

\begin{definition}\label{definition:DVA}
Decisional {DOH} Problem (D-DOH): Given a DOH problem and a threshold
$r$, does there exist an allocation such that $ES\geq r$?
\end{definition}

\begin{theorem}\label{th:np_completeness}
$D$-{DOH} problems are at-least weakly $\mathsf{NP}$-$\mathsf{Complete}$.
\end{theorem}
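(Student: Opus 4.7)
The plan is to first verify membership in NP and then establish NP-hardness by reducing from the classical PARTITION problem. Membership is immediate: an allocation $(S_1,\dots,S_k)$ is a polynomial-size certificate, and the objective $ES$ can be evaluated exactly as a rational and compared to $r$ in polynomial time, since each $P_i$ is a product of polynomial-bit rationals.

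For hardness, given a PARTITION instance $a_1,\dots,a_n$ with $\sum_j a_j = 2B$, I would construct a DOH instance with $k=2$ identical handlers, $v_j = a_j$, and $p_{1j} = p_{2j} = q^{a_j}$ for a rational $q \in (0,1)$ chosen below, with threshold $r = 2Bq^B$. The objective then collapses to
\[
ES \;=\; V_1 q^{V_1} + V_2 q^{V_2} \;=\; f(V_1) + f(2B - V_1),
\]
where $f(x) = xq^x$ and $V_i = \sum_{j\in S_i} v_j$, so that the allocation is now determined entirely by the single scalar $V_1$.

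The heart of the argument is to show that the balanced split $V_1 = V_2 = B$ is the unique maximizer. Computing $f''(x) = q^x (\ln q)(2 + x\ln q)$, strict concavity of $f$ on $[0,2B]$ holds precisely when $-\ln q \le 1/B$. Choosing, for instance, $q = (2B-1)/(2B)$ gives $-\ln q = \ln(1 + 1/(2B-1)) \le 1/(2B-1) \le 1/B$ for $B \ge 1$. Strict concavity together with the symmetric constraint $V_1 + V_2 = 2B$ then forces $f(V_1) + f(2B - V_1)$ to be uniquely maximized at $V_1 = B$, with optimum exactly $r = 2Bq^B$. Consequently, some allocation achieves $ES \ge r$ iff some subset of $\{a_1,\dots,a_n\}$ sums to $B$, completing the reduction.

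The main obstacle is keeping the reduction polynomial in bit-length: $p_j = q^{a_j}$ has representation size $\Theta(a_j \log B)$, which is super-polynomial when PARTITION numbers are given in binary. I would handle this in one of two ways: either (i) reduce instead from 3-PARTITION with numbers in unary, where every $a_j$ is polynomially bounded (which as a by-product in fact yields strong NP-hardness), or (ii) replace each $q^{a_j}$ by a polynomial-length rational approximation and quantify the gap between the balanced optimum $2f(B)$ and the largest value of $f(V_1) + f(2B-V_1)$ over unbalanced integer splits, ensuring the rounding error stays strictly below this gap.
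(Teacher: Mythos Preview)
Your approach is essentially the paper's: reduce from a partition-type problem by setting $v_j = a_j$ and $p_{ij} = q^{a_j}$, choosing the base $q$ so that $x \mapsto xq^x$ is strictly concave on the relevant interval, which forces the balanced split to be the unique maximizer. The paper goes directly to 3-\textsc{Partition} with $k=n$ handlers and base $b \in (1, e^{4/nB})$, which is precisely your fix (i) and sidesteps the bit-length obstacle you identified because the $x_j$ in a (unary) 3P instance are already polynomially bounded.

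One correction, however: your parenthetical that fix (i) ``in fact yields strong NP-hardness'' is not right, and the paper explicitly leaves strong NP-completeness of DOH open. Even when every $a_j$ is polynomially bounded, the probabilities $q^{a_j}$ are rationals whose reduced denominators grow like $(\mathrm{denom}\,q)^{a_j}$; this has polynomial \emph{bit-length} but exponential \emph{magnitude}, so the constructed DOH instances do not have polynomially bounded numeric values and the reduction establishes only weak NP-hardness. Separately, fix (ii) is more delicate than your sketch suggests: once each $p_j$ is rounded independently, $P_i$ is no longer exactly $q^{V_i}$, so $ES$ depends on the particular allocation and not merely on the scalar $V_1$. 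The argument can still be pushed through by bounding the cumulative multiplicative error against the $\Theta(1/B)$ gap between balanced and nearest-unbalanced integer splits, but that step needs to be spelled out rather than asserted.
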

\begin{proof}
Membership in $\mathsf{NP}$ is easily established. Given a guess for
$S_1, \dots, S_k$, we just compute $ES$ and verify that it is at least
$r$. This can be done in time proportional to the length of the words
representing the values and probabilities.

We will reduce the well studied $3$-$Partition$ problem ($3P$) as defined in
\cite{3_Part}, to the $D$-{DOH} problem. Given an instance
$I_{3P}: X = \{x_1, \dots, x_{3n}\}$ (where $\sum_{j=1}^{3n} x_j =
nB$) of the $3P$ problem, we create an instance $I_{DOH}$ of the
$D$-{DOH} problem with $3n$ objects and $n$ handlers as follows:
\begin{itemize}
\item Pick a rational number $b\in (1, e^{4/nB})$.
\item Set $v_j = x_j$, $\forall j \in \{1, \dots, 3n\}$.
\item Set $p_{ij} = b^{-x_j}$, $\forall j \in \{1, \dots, 3n\}$ and $\forall i \in \{1, \dots, n\}$.
\item Set $r = n\times B \times b^{-B}$.
\end{itemize}
By construction, $\sum_{i=1}^nV_i=nB$ and, for any allocation
$S_1,\ldots, S_n$, we have $ES=\sum_{i=1}^nV_ib^{-V_i}$.
We now argue that, because of the choice of the base, $b$, $ES\leq
\sum_{i=1}^nBb^{-B}=r$.  Consider any unequal allocation, i.e. in
which there exist $i$ and $i'$ such that $V_i\not=V_{i'}$.  We will
show that the allocation would be improved by redistributing the value
equally between handlers $i$ and $i'$.  Consider the function
$f(x)=xb^{-x}+(c-x)b^{-(c-x)}$, where $c=V_i+V_{i'}\leq nB$.  First,
observe that $f'(c/2)=0$.  Second, since $b<e^{4/c}$, $f'(x) > 0$ for
all $x\in[0,c/2)$ and $f'(x)<0$ for $x\in(c/2,c]$.  Hence $x=c/2$ is a
global maximum.  Thus, if an allocation has any $i,i'$ such that
$V_i\not=V_{i'}$, we could improve it by replacing $V_i$ and $V_{i'}$
by $(V_i+V_{i'})/2$.  Hence the optimal allocation must have $ES\leq
\sum_{i=1}^nBb^{-B}=r$, and this can only be obtained if
$V_1=\dots=V_n=B$.

Therefore, if there is a solution to the 3P problem, then there exists an
allocation for the D-DOH problem in which $V_1=\dots=V_n=B$, so that
$ES=r$.  If, on the other hand, there is an allocation of the D-DOH
problem such that $ES\geq r$, then we must have $ES=r$ and
$V_1=\dots=V_n=B$, so there exists a solution to the original 3P
problem.  
\end{proof}
Theorem \ref{th:np_completeness} shows that even the
restricted {D-DOH} problem -- where probabilities
are object dependent but handler independent -- is weakly
$\mathsf{NP}$-$\mathsf{Complete}$. It remains an open question
whether the {D-DOH} problem is strongly $\mathsf{NP}$-$\mathsf{Complete}$.

\section{Boundable Approximations}\label{sec:approximations}

We now derive an approximation algorithm for the DOH problem.  The
algorithm, which is based on dynamic programming with state-space
trimming~\cite{fptas_woeginger}, is an FPTAS when $k=O(1)$. We then build on this result and present an approximation for arbitrary $k$ in Section \ref{subsec:approx:general}.

Consider allocating the objects to handlers one at a time, i.e. we
allocate the first object, then the second, etc.  Let $S_{ti}$ be the
set of objects allocated to the $i^{th}$ handler at time step $t$,
$V_{ti}=\sum_{j\in S_{ti}}v_j$, and $P_{ti}=\prod_{j\in S_{ti}}p_{ij}$,
$ES_{ti}=P_{ti}V_{ti}$, and
$ES_t=\sum_{i=1}^kES_{ti}=\sum_{i=1}^kP_{ti}V_{ti}$.  If we allocate
the $t+1$st object to the $\ell$th handler, then we will have
\[
\begin{array}{lcr}
  P_{t+1,i}=\left\{
  \begin{array}{ll}
    P_{t,i}p_{\ell,t+1} & \textrm{ if }i=\ell \\
    P_{t,i} & \textrm{ if }i\not=\ell
  \end{array}
  \right.
  &\;\;\;\;\;\;\;\;\; &
  V_{t+1,i}=\left\{
  \begin{array}{ll}
    V_{t,i}+v_{\ell,t+1} & \textrm{ if }i=\ell \\
    V_{t,i} & \textrm{ if }i\not=\ell
  \end{array}
  \right.
\end{array}
\]
Thus $(V_{t1}, \dots, V_{tk}, P_{t1},\dots,P_{tk})$ is the only state
information we need to compute the state after allocating the $t$'th
object. This also gives a dominance relation among allocations: if $V_{ti}\leq
V'_{ti}$ and $P_{ti}\leq P'_{ti}$ for all $i$, then every extension of
allocation $(V_{t1},\dots,V_{tk},P_{t1},\dots,P_{tk})$ will have lower
ES than the corresponding extension of
$(V'_{t1},\dots,V'_{tk},P'_{t1},\dots,P'_{tk})$.  Thus we only need to
consider $(V'_{t1},\dots,V'_{tk},P'_{t1},\dots,P'_{tk})$ in our search
for the optimal allocation.

\begin{algorithm}[h]
  \caption{Dynamic Program for Exact Solutions}
  \label{alg:dp_exact}
  \begin{algorithmic}
    \Function{DOH-Solve}{$v_1, \dots, v_n, p_{11}, \dots, p_{kn}$}
      \State $\Phi_0 \gets \{(0, \dots, 0, 1, \dots, 1)\}$
      \For{$j =1 \to n$}
        \State $\Phi_j \gets \emptyset$
	      \ForAll{$(V_1,\dots,V_k,P_1,\dots,P_k) \in \Phi_{j-1}$}
          \ForAll{$i=1\to k$}
            \State  $\begin{array}{lcr}
              V'_\ell=\left\{
              \begin{array}{ll}
                V_i+v_{ij} & \textrm{ if }i=\ell \\
                V_\ell & \textrm{ if }i\not=\ell
              \end{array}
              \right.
              & \;\;\;\;\;\;\;\;\;\;\; &
              P'_\ell=\left\{
              \begin{array}{ll}
                P_ip_{ij} & \textrm{ if }i=\ell \\
                P_\ell & \textrm{ if }i\not=\ell
              \end{array}
              \right.
            \end{array}$
 	          \State $\Phi_j \gets \Phi_j \cup \{(V'_1, \dots, V'_k, P'_1, \dots, P'_k)\}$
          \EndFor
        \EndFor
        \State $\Phi_j \gets$ \Call{Trim}{$\Phi_j$}
      \EndFor
      \State \Return $\max_{D\in\Phi_n}ES(D)$
    \EndFunction
    \Function{Trim}{$\Phi$} \Comment{\textsc{Trim} function for computing exact solutions}
      \State \Return $\Phi$
    \EndFunction
  \end{algorithmic}
\end{algorithm}



\subsection{An FPTAS for constant $k$}\label{subsec:approximations:fptas}
We derive an FPTAS for the {DOH} problem with $k = O(1)$ using an
approach based on {trimming the state space}
\cite{fptas_woeginger} and the dynamic program shown in algorithm
\ref{alg:dp_exact}. The basic idea is to reduce the size of the state
space from exponentially to polynomially large by collapsing
{similar} states into a single state. We introduce a
{trimming parameter} $\delta = 1 + \frac{\epsilon}{4kn}$. Here,
$1 - \epsilon$ is the desired approximation, with $0 < \epsilon < 1$.

During the execution of the algorithm, we divide the state space into
$2k$-orthotopes whose boundaries are of the form
$[\delta^{r+1},\delta^r)$, and trim our states to keep track of at
  most one allocation that falls in each orthotope, as shown in
  Algorithm~\ref{alg:fptas}.

\begin{algorithm}
  \caption{FPTAS trimming procedure}
  \label{alg:fptas}
  \begin{algorithmic}
    \Function{Trim}{$\Phi$} \Comment{\textsc{Trim} function for computing approximate solutions}
	    \ForAll{$(V_1,\dots,V_k,P_1,\dots,P_k) \in \Phi$}
	      \State Find $\langle {r_1}, \dots, {r_{k}} \rangle$ such that $\delta^{r_i+1}\leq V_i \leq \delta^{r_i}$, $\forall i \in \{1, \dots, k\}$
	      \State Find $\langle s_{1}, \dots, {s_{k}} \rangle$ such that $\delta^{s_{i}+1}\leq P_i \leq \delta^{s_{i}}$, $\forall i \in \{1, \dots, k\}$
	      \If{$Orthotope[\langle r_1, \dots, r_{k}, s_1,\dots,s_k \rangle] = \perp$}
		      \State $Orthotope[\langle r_1, \dots, r_{k}, s_1,\dots,s_k \rangle] \gets (V_1,\dots,V_k,P_1,\dots,P_k)$
	      \EndIf
      \EndFor
      \State \Return $Orthotope$
    \EndFunction
  \end{algorithmic}
\end{algorithm}

\begin{theorem}\label{th:fptas}
If $k = O(1)$, then Algorithm~\ref{alg:dp_exact} with the trimming
function in Algorithm~\ref{alg:fptas} is a fully polynomial time
approximation scheme (FPTAS).
\end{theorem}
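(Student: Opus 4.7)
The plan is to verify the two ingredients of an FPTAS separately: polynomial running time in $n$, $1/\epsilon$, and the input size whenever $k=O(1)$, and a $(1-\epsilon)$-factor guarantee on the returned $\widetilde{ES}$ versus the optimum $ES^*$.

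For the running time, I will bound $|\Phi_j|$ after each trimming step. In any reachable state, $V_i\le V_{\max}:=\sum_j v_j$ and $P_i\ge P_{\min}:=\min_i\prod_j p_{ij}$, so the orthotope indices $r_i$ and $s_i$ range over $O(\log_\delta V_{\max})$ and $O(\log_\delta(1/P_{\min}))$ values respectively. Because $\log\delta=\log(1+\epsilon/(4kn))=\Theta(\epsilon/(kn))$, each range has size $O(kn\log V_{\max}/\epsilon)$ or $O(kn\log(1/P_{\min})/\epsilon)$, giving $|\Phi_j|=O\bigl((kn/\epsilon)^{2k}\log^k V_{\max}\log^k(1/P_{\min})\bigr)$, polynomial in $n,1/\epsilon$ when $k$ is constant. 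Each of the $n$ outer iterations processes each surviving state in $O(k)$ time, matching the bound tabulated in Table~\ref{tab:summary}.

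For the approximation ratio, I will prove by induction on $j$ the invariant: for every state $\sigma=(V_1,\dots,V_k,P_1,\dots,P_k)$ reachable by the untrimmed DP at step $j$, the trimmed set $\Phi_j$ contains some $\tilde\sigma=(\tilde V_1,\dots,\tilde V_k,\tilde P_1,\dots,\tilde P_k)$ with $\tilde V_i\ge V_i/\delta^j$ and $\tilde P_i\ge P_i/\delta^j$ for all $i$. The base case $j=0$ is immediate. For the inductive step, let $\sigma_{\mathrm{prev}}$ be the predecessor of $\sigma$, apply induction to obtain $\tilde\sigma_{\mathrm{prev}}\in\Phi_{j-1}$, and extend $\tilde\sigma_{\mathrm{prev}}$ by the same handler assignment used to produce $\sigma$ to get an intermediate state $\sigma'$. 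Adding $v_{ij}$ to both sides of $\tilde V_i\ge V_i/\delta^{j-1}$ and multiplying both sides of $\tilde P_i\ge P_i/\delta^{j-1}$ by $p_{ij}$ preserve the inequalities, so $\sigma'$ still dominates $\sigma$ up to the same factor $\delta^{j-1}$; the subsequent trimming step replaces $\sigma'$ by a representative $\tilde\sigma$ in the same $2k$-orthotope, losing at most one further factor of $\delta$ per coordinate. Applied at $j=n$ to the optimal allocation $\sigma^*$, the invariant yields some $\tilde\sigma\in\Phi_n$ with $\widetilde{ES}=\sum_i\tilde P_i\tilde V_i\ge ES^*/\delta^{2n}$, and the choice $\delta=1+\epsilon/(4kn)$ gives $\delta^{2n}\le\exp(\epsilon/(2k))\le 1/(1-\epsilon)$ for $\epsilon\in(0,1)$ and $k\ge 1$, which is the required $(1-\epsilon)$-approximation.

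The main obstacle is the joint accounting of the additive $V$ update and the multiplicative $P$ update as errors propagate across $n$ trimming rounds on $2k$ correlated coordinates. The key observation that keeps the bookkeeping clean is that each DP transition preserves the multiplicative gap between the exact and approximate coordinates, so only the $n$ trimming rounds contribute error, and the total loss is at most one factor of $\delta$ per round per coordinate; once this is in hand, calibrating $\delta$ via $\ln(1+x)\le x$ is routine.
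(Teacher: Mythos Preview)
Your proposal is correct and follows essentially the same approach as the paper: bound the number of orthotopes to establish the running time, and use induction over the $n$ DP rounds to track the multiplicative error introduced by trimming. Your coordinate-wise invariant $\tilde V_i\ge V_i/\delta^{j}$, $\tilde P_i\ge P_i/\delta^{j}$ is in fact slightly sharper than the paper's ES-level invariant (which accumulates a factor $\delta^{2k}$ per round instead of $\delta^{2}$, yielding $\delta^{2kn}$ rather than your $\delta^{2n}$ at the end), but the underlying argument is identical.
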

\begin{proof}
We must bound the number of states examined by the algorithm.  Since,
during each step of the allocation, it maintains at most one state per
orthotope, we can bound the number of states by bounding the number of
orthotopes.  For clarity, we assume in this proof that $\sum_j v_j=1$.

The smallest $V_i$ value that may occur during the execution of
Algorithm~\ref{alg:dp_exact} is $V_{min} = min(v_1, \dots, v_n)$.
Thus every $V_i$ value will always fall into one of the $\mid L_v
\mid$ intervals ($\left[\delta^{- L_v}, \delta^{- L_v + 1}\right),
  \dots, \left[\delta^{-2}, \delta^{-1}\right), \left[\delta^{-1},
      1\right]$), where $\delta^{-L_v} \leq V_{min} \leq \delta^{-L_v
      + 1}$ -- i.e., $L_v = \lfloor \frac{\ln V_{min}}{\ln \delta}
    \rfloor$.
Similarly, the smallest $P_i$ value that may occur during the
execution of Algorithm~\ref{alg:dp_exact} is $P_{min} = min(\prod_{j =
  1}^n p_{1j}, \dots, \prod_{j = 1}^n p_{kj})$.  Thus every $P_i$ value
will fall into one of $\mid L_p \mid$ intervals ($\left[\delta^{-L_p},
  \delta^{-L_p + 1}\right), \dots, \left[\delta^{-1}, 1\right]$), where
    $\delta^{-L_p} \leq P_{min} \leq \delta^{-L_p + 1}$ -- i.e., $L_p
    = \lfloor \frac{\ln P_{min}}{\ln \delta} \rfloor$.

Therefore, all states will fall within one of $L_v^kL_p^k$ orthotopes.  In
each iteration, the algorithm may generate $k$ new states for each
state in $\Phi_{j-1}$, so the total number of states generated during
all iterations is bounded by $nkL_v^kL_p^k$.  We know that $L_p^k =
\lfloor \frac{\ln P_{min}}{\ln \delta} \rfloor ^k \leq \left(
\frac{1+\frac{\epsilon}{4kn} \ln P_{min}}{\frac{\epsilon}{4kn}}
\right)^k \leq \left( \frac{8kn \ln P_{min}}{\epsilon}
\right)^k$. Similarly, we have $L_v^k \leq \left( \frac{8kn \ln
  V_{min}}{\epsilon} \right)^k$. Thus the running time of the
algorithm is $O(\frac{(nk)^{2k+1} \ln ^{k}V_{min}\ln
  ^{k}P_{min}}{\epsilon^{2k}})$.  If $k=O(1)$, this simplifies to
$O(\frac{n^{2k+1} \ln ^{k}V_{min}\ln ^{k}P_{min}}{\epsilon^{2k}})$.
Note that, if $N$ is the number of bits used to represent the problem
in base 2, then $\ln V=O(N)$, $\ln P=O(N)$, and $n=O(N)$, so the
running time is polynomial in $N$ and $1/\epsilon$.

Let $\Phi_0,\dots,\Phi_n$ be the sequence of state sets computed by
Algorithm~\ref{alg:dp_exact} without state-space trimming, and
$\Phi'_0,\dots,\Phi'_n$ the sequence of state sets computed by
Algorithm~\ref{alg:dp_exact} with state-space trimming.  By induction
on $i$, it is easy to see that for any state $D \in \Phi_i$, there is
a corresponding state $D' \in \Phi'_i$ such that
$\frac{ES[D]}{(1+\frac{\epsilon}{4nk})^{2ki}} \leq ES[D'] \leq ES[D]$,
which implies that $\frac{ES[D']}{ES[D]}\geq
\frac{1}{(1+\frac{\epsilon}{4nk})^{2ki}} \geq (1 - \epsilon)$.

Since we have shown that, if $k=O(1)$, our algorithm
executes in time bounded by a polynomial in $n$ and
$\frac{1}{\epsilon}$ and finds a solution within a factor $(1 -
\epsilon)$ of the optimal, it follows that it
is an {FPTAS} for the DOH problem when $k=O(1)$.
\end{proof}

\subsection{An Approximation for Arbitrary $k$}\label{subsec:approx:general}

We now derive an approximation algorithm for general $k$ by analyzing
the ratio between the optimal solution using all $k$ handlers and
optimal solutions that use only $c$ handlers.  Thus, by solving the
problem for $c$ handlers, which can be done in polynomial time for
constant $c$, we can obtain a $(1-\epsilon)(c-1)/k$ approximation for
the general problem.

\begin{theorem}
  For a DOH problem with $k$ handlers, let $ES$ be the expected
  surviving value from the optimal allocation.  For any subset $H$ of
  handlers, let $ES_H$ be the expected surviving value from the
  optimal allocation using only those handlers.  For any constant
  $c\geq 2$, there exists an $H$ with $|H|=c$ such that $ES/ES_H\leq
  k/(c-1)$.
\end{theorem}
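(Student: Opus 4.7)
The plan is to combine a pigeonhole/averaging argument on the contributions of the optimal allocation with a ``sacrificial handler'' construction. First I would fix an optimal allocation $S_1,\dots,S_k$ achieving $ES=\sum_{i=1}^k P_i V_i$, and reindex the handlers so that the per-handler contributions are sorted in decreasing order: $P_1V_1\geq P_2V_2\geq\cdots\geq P_kV_k$. By a straightforward averaging argument, the top $c-1$ handlers must account for at least a $(c-1)/k$ fraction of the total, i.e.\ $\sum_{i=1}^{c-1} P_iV_i\geq \frac{c-1}{k}\,ES$.

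Next I would take $H=\{1,2,\dots,c\}$ (the indices of the top $c$ handlers under this ordering) and exhibit a concrete feasible allocation on $H$ that lower-bounds $ES_H$. For $i=1,\dots,c-1$, assign exactly the set $S_i$ to handler $i$, so that handler $i$'s contribution is still $P_iV_i$. Since every object must be allocated (the DOH problem requires a partition of $\{1,\dots,n\}$), dump the remaining objects $S_c\cup S_{c+1}\cup\cdots\cup S_k$ onto handler $c$, the ``sacrificial lamb.'' That handler's contribution is nonnegative, so the resulting allocation has expected surviving value at least $\sum_{i=1}^{c-1}P_iV_i$. Combining with the averaging bound gives $ES_H\geq \frac{c-1}{k}\,ES$, and hence $ES/ES_H\leq k/(c-1)$, as claimed.

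The argument is essentially routine averaging, so there is no real obstacle; the only conceptual point worth flagging is the reason that $c$ handlers buy only a $(c-1)/k$ factor rather than the more tempting $c/k$. That gap comes directly from the partition constraint of Problem~\ref{problem:DOH}: since every object must be assigned to some handler in $H$, one of the $c$ handlers in $H$ has to absorb all the leftover objects from $S_c\cup\cdots\cup S_k$, and in the worst case those objects self-destruct and that handler's contribution collapses to near zero. Thus the averaging must budget for one sacrificial handler, which is what produces the $(c-1)/k$ bound. If the problem formulation allowed simply discarding objects outside $H$, the same argument would give the sharper $c/k$.
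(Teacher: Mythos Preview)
Your proof is correct and follows essentially the same approach as the paper: sort the handlers by their contribution in the optimal $k$-handler allocation, take $H$ to be the top $c$, keep the top $c-1$ allocations intact, and dump all remaining objects onto the $c$th handler as a sacrificial lamb, then apply averaging. Your write-up is in fact more detailed than the paper's, which compresses the argument to a couple of lines; the additional commentary on why the bound is $(c-1)/k$ rather than $c/k$ is a nice touch.
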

\begin{proof}
  Let $ES_i$ be the expected surviving value of the $i^{th}$ handler in
  the optimal solution using $k$ handlers.  Assume \emph{w.l.o.g.} that $ES_1
  \geq \dots \geq ES_k$.  If we reallocate all the objects assigned to
  handlers $c+1, \dots,k$ to handler $c$, then the resulting
  allocation will use exactly $c$ handlers (i.e. $H=\{1,\dots,c\}$,
  and will have 
  $ES_H\geq ES_1+\dots+ES_{c-1}\geq \frac{c-1}{k}ES$.
\end{proof}

To obtain a $(1-\epsilon)(c-1)/k$ approximation, we run the FPTAS from
Section~\ref{subsec:approximations:fptas} for every possible subset, $H$, of
$c$ handlers, and take the maximum $ES_H$ value.  Since there are
$\binom{k}{c}=O(k^c)$ such subsets, the running time will be
$O(k^cn^{2c+1}\ln^cV\ln^cP\epsilon^{-2c})$ and the
approximation factor will be $(1-\epsilon)(c-1)/k$.

\section{Special Cases}\label{sec:special_cases}
\subsection{The Case of Identical Objects}\label{subsec:special_cases:identical_objects}
In this section, we study the case where each object has the same value and
possesses failure probabilities that are handler dependent (but,
object independent). 

Since all objects are identical, \emph{w.l.o.g.,} we have $v_j = 1$ and
$p_{ij} = p_i$ ($\forall i \in \{1, \dots, k\}$ and $\forall j \in
\{1, \dots, n\}$). Now the problem can be stated as:
\begin{problem}\label{problem:hdp_DOH}
The Identical Objects DOH (IO-DOH) problem: 
Given $p_{ij} = p_i$ and $v_j = 1$ for $i=1,\dots k$ and
  $j=1\dots,n$, find partition $S_1\cup\dots\cup S_k$ of
  $\{1,\dots,n\}$ that maximizes $ES=\sum_{i=1}^kES_i$,
  where $ES_i=|S_i|p_i^{|S_i|}$.
\end{problem}

\begin{figure}
        \centering
        \begin{subfigure}[b]{0.495\textwidth}
                \centering
                \includegraphics[width=\textwidth]{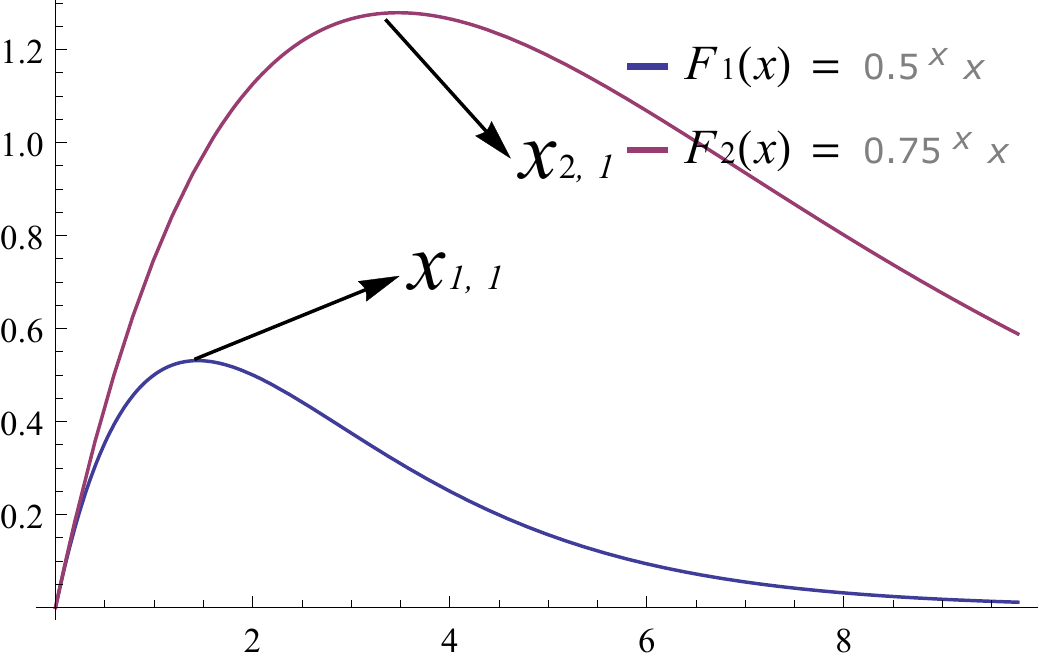}
                \caption{Plot of $xp_i^x$ for $p_1 = .5$, $p_2 = .75$}
                \label{fig:Fx}
        \end{subfigure}%
        ~ 
        \begin{subfigure}[b]{0.495\textwidth}
                \centering
                \includegraphics[width=\textwidth]{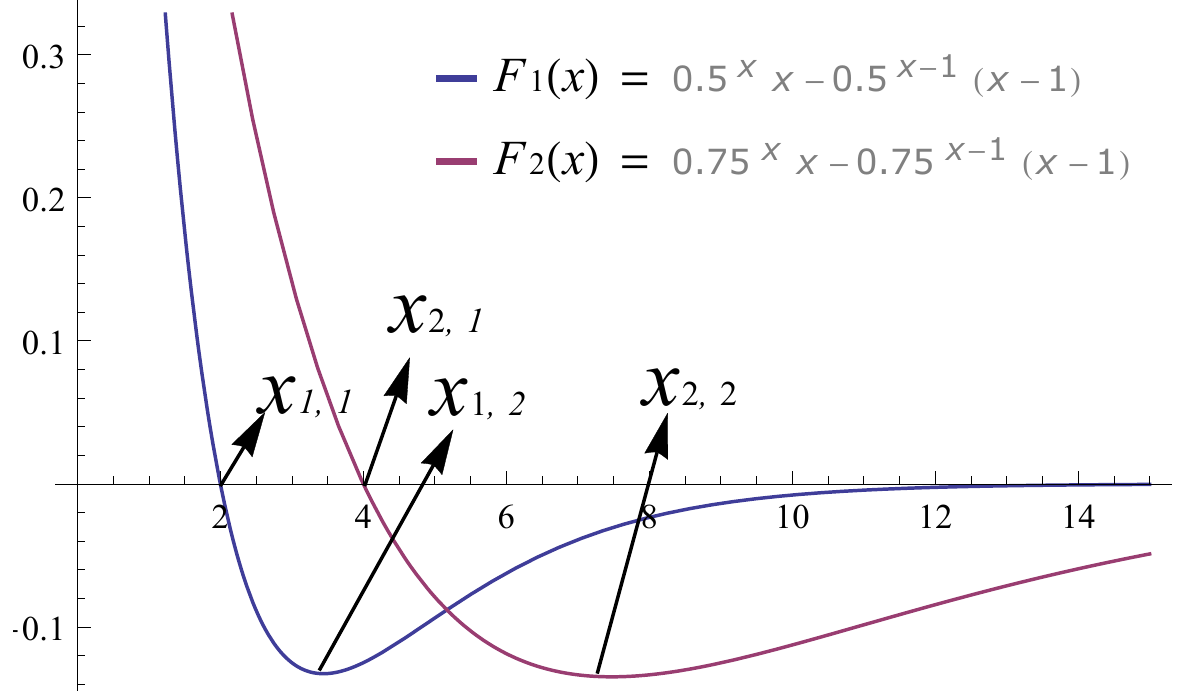}
                \caption{Gradient of $xp_i^x$ for $p_1 = .5$, $p_2 = .75$}
                \label{fig:F1x}
        \end{subfigure}
        \caption{Characteristics of the IO-DOH Problem: For any $0< p_i < 1$, there exist inflection points $x_{i,1}$ and $x_{i,2}$ for the function $F_i(x) = xp_i^x$ and its gradient, respectively. For the function $F_i(x) = xp_i^x$, the region from $x = 0,\dots, x_{i,1}$ is convex, while the region beyond $x = x_{i,1}$ is not. Further, the inflection points ($x_{i,1}$ and $x_{i,2}$) and their corresponding values increase with an increase in $p_i$.}
        \label{fig:identical_objects}
\end{figure}


While a simple application of Jensens inequality \cite{inequalities} is sufficient to prove that the optimal allocation cannot be one where objects are split evenly, we are able to understand substantially more about the problem structure through the exchange arguments made to prove Lemmas \ref{lemma:MMR} and \ref{lemma:runaway} (resulting in Algorithms \ref{alg:mmr} and \ref{alg:hdp_DOH}, respectively).

The Maximum Marginal Return algorithm \cite{MMR_1} is a greedy
approach which makes allocations of objects to the handlers that cause
the greatest increase in the objective function. Lemma \ref{lemma:MMR}
shows that this technique always finds an optimal solution when $n
\leq \sum_{i=1}^k x_{i,1}$. However, when $n > \sum_{i=1}^k x_{i,1}$, MMR is not optimal due to the absence of a convex structure as shown in Figure \ref{fig:identical_objects}. In this case, Lemma \ref{lemma:runaway} shows that an optimal solution has one {sacrificial} handler -- i.e., it is beneficial to allow the highly probable destruction of one handler.
%

\begin{lemma}\label{lemma:MMR}
The Maximum Marginal Return (MMR) algorithm returns an optimal
solution when $n\leq \sum_{i=1}^kx_{i,1}$.
\end{lemma}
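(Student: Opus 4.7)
The plan is a standard exchange argument rooted in diminishing marginal returns, coupled with an invariant that keeps both MMR's and some optimal allocation inside the concave region $[0, x_{i,1}]$ of each $F_i$ under the hypothesis $n \leq \sum_i x_{i,1}$.

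First I would record the discrete form of concavity: because $F_i(x) = x p_i^x$ is concave on $[0, x_{i,1}]$ (Figure \ref{fig:identical_objects}), the discrete marginal returns $\Delta_i(t) := F_i(t) - F_i(t-1)$ are non-increasing in $t$ over integers $t$ with $t \leq x_{i,1}$; this follows immediately from the mean-value theorem applied to $F_i'$ on the concave region.

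Granting for now that both the MMR allocation $(a_1, \ldots, a_k)$ and some optimal allocation $(a_1^*, \ldots, a_k^*)$ satisfy $a_i, a_i^* \leq x_{i,1}$ for every $i$, I would run the textbook exchange. If $a_i^* > a_i$ and $a_j^* < a_j$, swap one object from $i$ to $j$ in the optimal allocation; the change in objective is $\Delta_j(a_j^* + 1) - \Delta_i(a_i^*)$, which is non-negative by the chain
\begin{equation*}
\Delta_j(a_j^* + 1) \;\geq\; \Delta_j(a_j) \;\geq\; \Delta_i(a_i + 1) \;\geq\; \Delta_i(a_i^*),
\end{equation*}
where the outer two inequalities use diminishing returns and the middle one is MMR's greedy choice at the moment it appended its $a_j$-th object to handler $j$. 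Iterating the swap converts OPT into MMR without decreasing the objective, so MMR is optimal.

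The main obstacle, as I anticipate it, is establishing the invariant that the algorithm never pushes any handler past $x_{i,1}$ and that the same may be assumed of an optimal solution. For MMR I would proceed inductively: if handler $i$ already holds $\lfloor x_{i,1} \rfloor$ objects, then $n \leq \sum_{i'} x_{i',1}$ forces some other handler $i'$ to hold strictly fewer than $\lfloor x_{i',1} \rfloor$ objects, so using $x_{i,1} = -2/\ln p_i$ and the explicit form $F_i(x) = x p_i^x$, the marginal return $\Delta_{i'}$ at an unfilled concave slot must exceed $\Delta_i(\lfloor x_{i,1} \rfloor + 1)$, contradicting MMR's choice to enlarge $i$. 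A symmetric swap argument reduces any optimal allocation to one whose coordinates all lie in the concave regions. Carrying out this quantitative comparison across the inflection---and verifying that $\Delta_i$ in the convex regime is dominated by $\Delta_{i'}$ in any still-concave regime---is the technical heart of the proof and where I would spend the most care.
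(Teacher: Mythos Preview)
Your exchange argument is correct and rests on the same two ingredients the paper uses: concavity of $F_i$ on $[0,x_{i,1}]$ (diminishing discrete marginals) and the greedy inequality at the step where MMR last added to a given handler. The paper organises this as an induction on $n$---peel one object off an arbitrary competitor $Z_{n+1,k}$, invoke the inductive hypothesis for $n$, and compare the two increments via the chain $F'_i(o_i)\ge F'_j(o_j)\ge F'_j(z_j-1)$---whereas you compare the final MMR and optimal vectors directly and morph one into the other by repeated single swaps. These are two packagings of the same argument; neither buys anything the other does not.

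Two small corrections. First, $x_{i,1}=-1/\ln p_i$, not $-2/\ln p_i$; the point $-2/\ln p_i$ is the inflection $x_{i,2}$ of $F_i$. Second, the invariant you flag as ``the technical heart'' is in fact immediate once $x_{i,1}$ is read as the (integer) maximiser of $F_i$: pushing handler $i$ beyond $x_{i,1}$ has marginal return $\Delta_i(x_{i,1}+1)\le 0$, while any handler $i'$ still below $x_{i',1}$ (one must exist by the hypothesis $n\le\sum x_{i,1}$) offers $\Delta_{i'}>0$, so MMR never overshoots; the same swap shows any optimal allocation may be taken to satisfy $a_i^*\le x_{i,1}$. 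No quantitative comparison ``across the inflection'' is needed. One more gloss worth making explicit in your chain: the middle inequality $\Delta_j(a_j)\ge\Delta_i(a_i+1)$ is not literally MMR's choice at that step (handler $i$ may have held only $s_i\le a_i$ objects then), but follows from $\Delta_j(a_j)\ge\Delta_i(s_i+1)\ge\Delta_i(a_i+1)$ using concavity once more.
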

\begin{proof}

The following proof is via induction and an exchange argument that exploits the presence of the convex region of $F(x)$, as illustrated in Figure \ref{fig:identical_objects}. 

Clearly, MMR finds the optimal
allocation when only one object is to be allocated. Since all objects are identical, we represent the allocation $S = \langle S_1, \dots, S_k \rangle$ of $n$ objects to $k$ handlers by $O_{n,k} = \langle o_1, \dots, o_k \rangle$, where $o_i = |S_i|$.

We assume that the
solution returned by MMR for allocating $n$ objects to $k$ handlers --
$O_{n,k} = \langle o_1, \dots, o_k \rangle$ -- is optimal. Let handler
$i$ be the handler for which $F_i(o_i+1) - F_i(o_i)$ is maximum --
i.e., allocating a object to this handler causes the largest increase
(or, smallest decrease) in the expected survival value. According to
the MMR algorithm, the $n+1 ^{th}$ object is allocated to this handler
-- i.e., we have $OPT^*_{n+1,k} = \langle o^*_1, \dots, o^*_i, \dots,
o^*_k \rangle = \langle o_1, \dots, o_i+1, \dots, o_k \rangle$. The
change in expected survival value for going from $OPT_{n,k}$ to
$OPT^*_{n+1,k}$ is:
\begin{equation}\label{eq:MMR1}
F'_i(o_i) = F_i(o_i+1) - F_i(o_i)
\end{equation}

Let $Z_{n+1,k} = \langle z_1, \dots z_k \rangle$ be any {other}
solution for allocating $n+1$ objects to $k$ handlers. Clearly, no
optimal solution will have a handler $i$ with more than $x_{i,1}$
objects allocated to them (due to the fact that $x = x_{i,1}$ is a global maximum and strictly decreasing for all greater values of $x$) -- therefore we will
assume $z_i \leq x_{i,1}$ ($\forall i \in \{1,...,k\}$). Note that
since $Z_{n+1,k} \neq OPT^*_{n+1,k}$, there must be some handler $j$
for which:
\begin{equation}\label{eq:MMR2}
z_j > o^*_j \geq o_j
\end{equation}
From this handler we remove a single object (in $Z_{n+1,k}$), to get
an allocation given by $Z^*_{n,k} = \langle z^*_1, \dots, z^*_j,
\dots, z^*_k\rangle = \langle z_1, \dots, z_j - 1, \dots,
z_k\rangle$. The change in expected survival value for going from
$Z^*_{n,k}$ to $Z_{n+1,k}$ is:
\begin{equation}\label{eq:MMR3}
F'_j(z^*_j) = F_j(z_j) - F_j(z_j-1)
\end{equation}

Now, in $OPT_{n,k}$, we add a object to the same handler $j$ to get
$\widehat{OPT}_{n+1,k} = \langle \widehat{o_1}, \dots, \widehat{o_j},
\dots, \widehat{o_k}\rangle = \langle o_1, \dots, o_j + 1, \dots, o_k
\rangle$. The change in expected survival value for going from
$OPT_{n,k}$ to $\widehat{OPT}_{n+1,k}$ is:
\begin{equation}\label{eq:MMR4}
F'_j(o_j) = F_j(o_j+1) - F_j(o_j)
\end{equation}

Due to the assumption that $OPT_{n,k}$ is optimal, (\ref{eq:MMR2}),
and the fact that $x = x_{i,1}$ is an inflection point with strictly decreasing gradient until $x = x_{i,2}$ (see Figure \ref{fig:identical_objects}), we have that (\ref{eq:MMR1}) $\geq$ (\ref{eq:MMR4}) $\geq$
(\ref{eq:MMR3}). Therefore, the solution returned by MMR is at least
as good as any other solution -- i.e., $OPT^*_{n+1,k} \geq Z_{n+1,k}$.
\end{proof}

\begin{lemma}\label{lemma:runaway}
When $n > \sum_{i=1}^k x_{i,1}$, there is exactly one handler $i$ with
more than $x_{i,1}$ objects allocated to it.
\end{lemma}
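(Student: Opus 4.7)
The plan is to combine a pigeonhole argument for existence with a single-object exchange argument for uniqueness, both driven by the curvature of $F_i$ beyond $x_{i,1}$.

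For the existence half, every allocation satisfies $\sum_i o_i = n > \sum_i x_{i,1}$, so pigeonhole forces at least one handler to have $o_i > x_{i,1}$.

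For the ``at most one'' half, I would suppose for contradiction that some optimal allocation has two handlers $i$ and $j$ both satisfying $o_i > x_{i,1}$ and $o_j > x_{j,1}$. Set $a_i = F_i(o_i) - F_i(o_i+1)$ and $b_i = F_i(o_i-1) - F_i(o_i)$, and define $a_j, b_j$ analogously; all four quantities are positive since both handlers sit past the unique maximum of their $F_i$. The structural fact I would invoke is that past $x_{i,1}$ the marginal-loss magnitudes evolve monotonically (this is the content of Figure~\ref{fig:identical_objects} and was already used in the proof of Lemma~\ref{lemma:MMR}), giving $a_i < b_i$ and $a_j < b_j$. The change in $ES$ from moving one object from $j$ to $i$ is $b_j - a_i$, while the reverse swap gives $b_i - a_j$. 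If both were strictly negative we would have $a_i > b_j$ and $a_j > b_i$; chaining these with the structural inequalities produces the cycle $a_i > b_j > a_j > b_i > a_i$, a contradiction. Hence at least one of the two swaps is non-negative, and by optimality of the starting allocation it must in fact yield zero change. Performing this zero-change swap produces another optimal allocation in which the sender's count is closer to $x_{i,1}$. Iterating from the new allocation, the sender's count decreases monotonically and eventually drops to $o \leq x_{i,1}$, reducing the number of handlers with more than $x_{i,1}$ objects by one. Repeating until only one such handler remains, and combining with the pigeonhole lower bound, gives an optimal allocation with exactly one handler satisfying $o_i > x_{i,1}$, as claimed.

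The main technical obstacle I expect is the boundary step of the iteration, where the sending handler reaches $o_j = \lceil x_{j,1} \rceil$ and the next swap would push $o_j - 1$ below $x_{j,1}$, into the region where the monotonicity underlying $a_j < b_j$ need not hold. This is exactly the favourable case, since it signals that the sender is leaving the over-limit set; only the non-negativity of this final swap must then be verified directly, which follows from a point-by-point comparison of $F_j$ values on either side of $x_{j,1}$ rather than from the convex-region cycle argument used for interior swaps.
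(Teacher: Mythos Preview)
Your overall exchange-based strategy matches the paper's, but the ``structural fact'' you invoke is false on a nontrivial part of the relevant region, and this breaks the cycle argument. The inequality $a_i < b_i$, i.e.\ $F_i(o_i-1) + F_i(o_i+1) > 2F_i(o_i)$, is precisely discrete convexity of $F_i$ at $o_i$. For $F_i(x) = xp_i^{x}$ one has $F_i''(x) = p_i^{x}\ln p_i\,(2 + x\ln p_i)$, which is negative on $(0,x_{i,2})$ and positive on $(x_{i,2},\infty)$, where $x_{i,2} = -2/\ln p_i = 2x_{i,1}$. Hence $F_i$ is \emph{concave} throughout $(x_{i,1},x_{i,2})$, and there $a_i > b_i$; for instance with $p_i = 1/2$ and $o_i = 2$ one gets $a_i = 1/8$ and $b_i = 0$. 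Your chain $a_i > b_j > a_j > b_i > a_i$ therefore collapses whenever either overloaded handler lies in this intermediate zone, which is half of the overloaded range. The boundary difficulty you flag (at $o_j = \lceil x_{j,1}\rceil$) is a symptom of this, not a separate edge case: the convexity you need has already failed well before the boundary is reached. Note also that Lemma~\ref{lemma:MMR} concerns the region \emph{below} $x_{i,1}$ and exploits concavity there, so it does not support the inequality you cite it for.

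The paper's proof sidesteps this by comparing the two handlers' forward differences $F'_i$ and $F'_j$ directly rather than appealing to within-handler convexity. After ordering so that $p_i \ge p_j$, it argues that the curves $F'_i$ and $F'_j$ cross: for each $\beta$ there is a threshold $\gamma$ such that $F'_j(x_{j,1}+\beta)$ lies between $F'_i(x_{i,1}+\gamma)$ and $F'_i(x_{i,1}+\gamma-1)$, and depending on which side of $\gamma$ the current $\alpha$ falls, one of the two single-object moves is strictly improving. This between-handler crossing argument, rather than a per-handler second-difference sign, is what carries the exchange through the concave stretch $(x_{i,1},x_{i,2})$.
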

\begin{proof}

The following proof is via an exchange argument that exploits the structures of $F(x)$ and its gradient, as illustrated in Figure \ref{fig:identical_objects}. 

Since all objects are identical, we represent the allocation $S = \langle S_1, \dots, S_k \rangle$ of objects to $k$ handlers by $O = \langle o_1, \dots, o_k \rangle$, where $o_i = |S_i|$. Now, consider the allocation $O = \langle o_1, \dots, o_k\rangle$. Let
$o_i$ and $o_j$ be any two handlers with greater than $x_{i,1}$ and
$x_{j,1}$ allocated objects, respectively. We have $ES_i + ES_j =
F_i(x_{i,1} + \alpha) + F_j(x_{j,1} + \beta)$. Without loss of
generality, we assume that $p_i \geq p_j$.  Let $O^*$ be the
allocation resulting from a move of $1$ object from handler $i$ to
handler $j$. We have $ES^* _i + ES^* _j = F_i(x_{i,1} + \alpha -1) +
F_j(x_{j,1} + \beta + 1)$. The change in expected survival value is
given by:
\begin{align}
\left[ F_i(x_{i,1} + \alpha - 1) - F_i(x_{i,1} + \alpha) \right] + \left[ F_j(x_{j,1} + \beta + 1) - F_j(x_{j,1} + \beta)\right]\nonumber\\
= F'_j(x_{j,1} + \beta) - F'_i(x_{i,1} + \alpha - 1) \label{eq:diffp_runaway1}
\end{align}

Let $\widehat{O}$ be the allocation resulting from a move of $1$
object from handler $j$ to handler $i$. We have $\widehat{ES} _i +
\widehat{ES} _j = F_i(x_{i,1}+ \alpha + 1) + F_j(x_{j,1} + \beta
-1)$. The change in expected survival value is given by:
\begin{align}
\left[ F_i(x_{i,1} + \alpha + 1) - F_i(x_{i,1} + \alpha) \right] + \left[ F_j(x_{j,1} + \beta -1) - F_j(x_{j,1} + \beta)\right]\nonumber \\ 
= F'_i(x_{i,1} + \alpha) - F'_j(x_{j,1} + \beta - 1)\label{eq:diffp_runaway2}
\end{align}
Observe (in Figure \ref{fig:identical_objects}) that $F_i'(x)$ and $F'_j(x)$ are strictly increasing
functions for any input $x$ greater than $x_{i,2}$ and $x_{j,2}$,
respectively, and there is some point $x\in \mathbf{R}^+$ where $F'_i$
and $F'_j$ intersect -- i.e., $F'_j(x - \epsilon) < F'_i(x -
\epsilon)$ and $F'_j(x + \epsilon) > F'_i(x + \epsilon)$. Therefore,
for any $\beta$, there exists some $\gamma$ such that: $F'_j(x_{j,1} +
\beta) > F'_i(x_{i,1} + \gamma)$ and $F'_j(x_{j,1} + \beta) <
F'_i(x_{i,1} + \gamma - 1)$.
\begin{itemize}
\item If $\alpha - 1 > \gamma$: (\ref{eq:diffp_runaway1}) $> 0$
  -- therefore, we can simply move $1$ object from handler $i$ to
  handler $j$ and obtain a better solution than $O$.
\item If $\alpha - 1 \leq \gamma$: (\ref{eq:diffp_runaway2}) $>
  0$ -- therefore, we can simply move $1$ object from handler $j$ to
  $i$ and obtain a better solution than $O$.
\end{itemize}
Since we are able to make an improvement on any allocation containing
two or more handlers with more than $x_{*,1}$ allocated objects, we
have shown that no such solution may be optimal. Therefore, the
optimal solution has exactly one handler $i$ with more than $x_{i,1}$
objects.
\end{proof}

\begin{corollary}\label{cor:runaway}
When $n > \sum_{j=1}^k x_{j,1}$, if $|S_i| > x_{i,1}$ then $|S_i| = n -
(\sum_{j \neq i} x_{j,1})$ and $|S_j| = x_{j,1}$, $\forall j \in \{1,
\dots k\} \setminus \{i\}$.
\end{corollary}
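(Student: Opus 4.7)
The plan is to deduce the corollary from Lemma~\ref{lemma:runaway} via one more exchange argument. Lemma~\ref{lemma:runaway} already nails down the ``shape'' of any optimal allocation: exactly one handler $i^*$ is sacrificial with $|S_{i^*}|>x_{i^*,1}$, while every other handler $j$ satisfies $|S_j|\leq x_{j,1}$. What remains is to rule out the possibility that some non-sacrificial handler is \emph{under}-filled, i.e.\ that $|S_j|<x_{j,1}$ for some $j\neq i^*$.

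First I would fix such a hypothetical optimal allocation $S$ with $|S_j|<x_{j,1}$ for some $j\neq i^*$, and consider the perturbation obtained by moving a single object from $i^*$ to $j$. Writing $m_j=|S_j|$ and $m^*=|S_{i^*}|$, the change in expected surviving value is
\[
\Delta \;=\; \bigl[F_j(m_j+1)-F_j(m_j)\bigr] \;+\; \bigl[F_{i^*}(m^*-1)-F_{i^*}(m^*)\bigr].
\]
Both brackets are positive: the first because $F_j$ is strictly increasing on $[0,x_{j,1}]$ and $m_j<x_{j,1}$, the second because $F_{i^*}$ is strictly decreasing beyond $x_{i^*,1}$ and $m^*>x_{i^*,1}$ (the very property established in Lemma~\ref{lemma:runaway}, which also ensures via $n>\sum_\ell x_{\ell,1}$ that $m^*-1$ is still in the decreasing region). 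So $\Delta>0$, contradicting the optimality of $S$.

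Hence every $j\neq i^*$ must satisfy $|S_j|=x_{j,1}$, and then the conservation constraint $\sum_\ell |S_\ell|=n$ forces $|S_{i^*}|=n-\sum_{j\neq i^*}x_{j,1}$, which is exactly the statement of the corollary.

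The main obstacle is the integer/continuum tension: $x_{j,1}=-1/\ln p_j$ is real-valued while $|S_j|$ is an integer, so ``$|S_j|=x_{j,1}$'' must be read as $|S_j|$ being the integer that maximizes $F_j$ on $\mathbb{N}$, and one has to verify the monotonicity claims in the two brackets at the integer neighbors of $x_{j,1}$ and $x_{i^*,1}$ respectively. Once the same convention used in Lemmas~\ref{lemma:MMR} and~\ref{lemma:runaway} is adopted, both monotonicity facts follow immediately from the unimodal structure of $F$ already exploited in those proofs, and no further calculation is needed.
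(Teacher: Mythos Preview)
Your argument is correct and is exactly the natural exchange argument one would use to justify the corollary; the paper in fact states Corollary~\ref{cor:runaway} without proof, treating it as an immediate consequence of Lemma~\ref{lemma:runaway}. Your write-up simply makes explicit the one remaining step the paper leaves implicit---ruling out under-filled non-sacrificial handlers by moving an object from the sacrificial handler---and your final paragraph correctly flags the same integer/continuum convention the paper uses throughout Section~\ref{subsec:special_cases:identical_objects}.
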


\begin{algorithm}[h]
\caption{The Identical Objects DOH Algorithm when $n > \sum_{i=1}^k x_{i,1}$}
\label{alg:hdp_DOH}
\begin{algorithmic}
\Function{IO-DOH}{$n, k, p_1, \dots, p_k, x_{1,1}, \dots, x_{k,1}$}
	\For{$i = 1 \to k$}
		\State $S_i \gets \{\sum_{j=1}^{i-1} x_{j,1}, \dots, \sum_{j=1}^{i} x_{j,1}\}$
	\EndFor
	\For{$i = 1 \to k$}
		\State $S_i \gets S_i \bigcup \{\sum_{j=1}^{k} x_{j,1}, \dots, n\}$
		\If{$\max < ES\langle S_1, \dots, S_k \rangle$}
			\State $\max \gets ES\langle S_1, \dots, S_k \rangle$ and $OPT \gets \langle S_1, \dots, S_k \rangle$
		\EndIf
	\EndFor
\State \Return $OPT$ 
\EndFunction
\end{algorithmic}
\end{algorithm}


In the case where $n \leq \sum_{i=1}^k x_{i,1}$, algorithm
\ref{alg:hdp_DOH} runs the MMR algorithm to find an allocation. By
lemma \ref{lemma:MMR}, this is optimal.

In the case where $n > \sum_{i=1}^k x_{i,1}$, algorithm
\ref{alg:hdp_DOH} goes through $k$ iterations. In the $i^{th}$
iteration, it allocates $x_{j,1}$ objects to the $j^{th}$ handler
($\forall j \in \{1, \dots, k\}\setminus \{i\})$ and the remaining
objects to the $i^{th}$ handler. Following this, the $ES$ of the
allocation is computed and stored in the $i^{th}$ index of an
array. The algorithm returns the allocation corresponding to the
maximum value in this array as the optimal allocation. By lemma
\ref{lemma:runaway} and corollary \ref{cor:runaway}, this is the
optimal allocation.

\subsection{Identical Handlers and Values} \label{subsec:special_cases:identical_handlers_values}
In this case, we require that each object has the same value and
possesses failure probabilities that are object dependent (but,
handler independent) -- i.e., \emph{w.l.o.g.,} we have $v_j = 1$ and
$p_{ij} = p_j$ ($\forall i \in \{1, \dots, k\}$ and $\forall j \in
\{1, \dots, n\}$). Now the problem can be stated as:
\begin{problem}\label{problem:IVOIH_DOH}
The Identical Handlers and Values DOH (IHV-DOH) problem: 
Given $p_{ij} = p_j$ and $v_j = 1$ for $i=1,\dots k$ and
  $j=1\dots,n$, find partition $S_1\cup\dots\cup S_k$ of
  $\{1,\dots,n\}$ that maximizes $ES=\sum_{i=1}^kES_i$,
  where $ES_i=|S_i|\prod_{j\in S_i}p_j$.
\end{problem}
Observe that if instead, not all objects had identical values, the
problem would be $\mathsf{NP}$-$\mathsf{complete}$ as shown in theorem
\ref{th:np_completeness}.

\begin{theorem}\label{th:ivoih_doh}
Assuming that objects are ordered by decreasing survival probabilities
($p_j$), each handler receives a contiguous subsequence of objects.
\end{theorem}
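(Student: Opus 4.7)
The plan is to prove the theorem by an exchange argument on pairs of objects straddling two handlers. Assume the objects are indexed so that $p_1\geq p_2\geq\cdots\geq p_n$, and suppose for contradiction that in some optimal allocation $S_1,\dots,S_k$ there exist indices $j_1<j_2<j_3$ with $j_1,j_3\in S_a$ and $j_2\in S_b$ for some $a\neq b$. Write $s=|S_a|$, $t=|S_b|$, and factor out the relevant probabilities: set $P_a=A\,p_{j_3}=A'\,p_{j_1}$ with $A=\prod_{j\in S_a\setminus\{j_3\}}p_j$ and $A'=\prod_{j\in S_a\setminus\{j_1\}}p_j$, and $P_b=B\,p_{j_2}$ with $B=\prod_{j\in S_b\setminus\{j_2\}}p_j$.

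The first step is to compute the change in $ES_a+ES_b$ under two candidate swaps. Swapping $j_2$ with $j_3$ leaves the sizes $s,t$ unchanged and changes the two handlers' contribution by $(p_{j_2}-p_{j_3})(sA-tB)$. Swapping $j_1$ with $j_2$ similarly changes the contribution by $(p_{j_1}-p_{j_2})(tB-sA')$. By optimality, both quantities must be $\leq 0$. Because $p_{j_2}\geq p_{j_3}$ and $p_{j_1}\geq p_{j_2}$ under our ordering, this forces $sA\leq tB$ and $tB\leq sA'$, hence $A\leq A'$. The second step is to notice that $A/A'=p_{j_1}/p_{j_3}\geq 1$ (again by the ordering), so $A=A'$, which means $p_{j_1}=p_{j_3}$ and therefore $p_{j_1}=p_{j_2}=p_{j_3}$. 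In that case both swaps leave $ES$ unchanged, so we may relabel the three objects without changing the value of the allocation; doing so for every such triple removes the interleaving.

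Finally, I would conclude by iterating: repeatedly picking a ``bad'' triple $(j_1,j_2,j_3)$ and applying either a strict improvement (contradicting optimality) or a value-preserving relabeling terminates in an allocation in which no handler's object set has a ``hole,'' i.e.\ each $S_i$ is a contiguous block of indices in the sorted order.

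The main obstacle I expect is the tie case $p_{j_1}=p_{j_2}=p_{j_3}$, where both swap inequalities are equalities and one cannot derive a strict contradiction. Handling it cleanly requires the observation that identical probabilities mean the objects are interchangeable in the objective, so any optimal allocation can be massaged into a contiguous one by a permutation on each level set of $p$; a short termination argument (e.g.\ measuring the total number of ``crossings'' $\#\{(j,j'):j<j',\ j\in S_a,\ j'\in S_b,\ j''\in S_a\text{ for some }j'<j''\}$ and showing it strictly decreases after each swap, with ties broken by relabeling) closes the loop.
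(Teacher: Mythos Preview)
Your exchange argument is correct and complete in spirit, but it takes a genuinely different route from the paper's proof. The paper first reorders the \emph{handlers} by decreasing $ES_i$, picks a single ``out-of-order'' pair $l\in S_i$, $m\in S_{i+1}$ with $p_l\le p_m$, and shows that the single swap $l\leftrightarrow m$ cannot decrease $ES$; the key inequality $ES_i\ge ES_{i+1}$ does all the work (via $\frac{p_m}{p_l}ES_i+\frac{p_l}{p_m}ES_{i+1}\ge ES_i+ES_{i+1}$, which is just AM--GM combined with $ES_i\ge ES_{i+1}$). Your approach instead stays local: you never order the handlers, but you look at a sandwiched triple and play \emph{two} candidate swaps against each other to force $sA\le tB\le sA'$ while $A\ge A'$. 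The paper's route is shorter and avoids any termination bookkeeping; your route is more self-contained and does not need the global handler ordering or the observation that the handler with larger $ES$ should hold the higher-probability object.

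One small point to tighten: your derivation ``this forces $sA\le tB$ and $tB\le sA'$'' only goes through when both $p_{j_1}>p_{j_2}$ and $p_{j_2}>p_{j_3}$ are strict. In the \emph{partial}-tie cases (e.g.\ $p_{j_1}>p_{j_2}=p_{j_3}$) one of the two swap inequalities becomes $0\le 0$ and gives no information, so you cannot reach $A\le A'$. This is harmless for your plan---that same swap is then value-preserving and already resolves the triple---but your write-up currently names only the all-equal case $p_{j_1}=p_{j_2}=p_{j_3}$ as the tie obstacle. Just split the case analysis into ``all three strict'' (contradiction) versus ``at least one equality'' (a free swap reduces crossings), and your crossing-count termination argument finishes the job exactly as you sketched.
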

\begin{proof}
Without loss of generality, let $p_1 \geq p_2 \geq \dots \geq
p_n$. Consider the allocation given by $S = \langle S_1, \dots, S_k \rangle$
where $\forall i \in \{1, \dots, k-1\}$: $ES_{i} \geq ES_{{i+1}}$. Let
$S_i$ be the first handler with a non-contiguous allocation -- i.e.,
object $l \in S_i$ but object $m \in S_{i+1}$ (where $p_l \leq
p_m$). Now, consider the allocation given by $S^* = \langle S_1, \dots,
S^*_i, S^*_{i+1}, \dots, S_k \rangle$ where object $l \in S^*_{i+1}$ and
object $m \in S^*_i$ -- i.e., the allocation where the handlers of
object $l$ and $m$ are swapped. Now, observe that: $\left[ES_{i} +
  ES_{{i+1}}\right] - \left[ES^*_{i} +
  ES^*_{{i+1}}\right]$=$\left[ES_{i} + ES_{{i+1}}\right] -
\left[\frac{p_m}{p_l}ES_{i} +
  \frac{p_l}{p_m}ES_{{i+1}}\right]$\\ $\leq \left[ES_{i} +
  ES_{{i+1}}\right] - \left[ES_{i} + ES_{{i+1}}\left(\frac{p_l}{p_m}
  +\frac{p_m}{p_l} - 1 \right)\right] \leq 0$. Therefore, the expected
survival value of a contiguous allocation is always at-least as good
as any non-contiguous allocation.
\end{proof}

Since the optimal allocation of objects to handlers is contiguous
(given objects sorted by their survival probabilities), the following
recursive relation may be used to find the optimal allocation.
\begin{equation*}
OPT(i,j) = \max_{1 \leq l \leq i} \left[ \left( i - l\right)\prod_{m=l}^i p_m  + OPT(l,j-1)\right]
\end{equation*}
where $OPT(i,0) = \infty$ and $OPT(n,k)$ returns the maximum expected
survival value for $n$ objects allocated to $k$ handlers in $O(n^2k)$
time.

\subsection{Identical Risks} \label{subsec:special_cases:identical_risks}
In this case, we require that each object possesses a distinct value
and failure probabilities that are handler dependent (but, object
independent) -- i.e., \emph{w.l.o.g.,} we have $p_{ij} = p_i$
($\forall i \in \{1, \dots, k\}$ and $\forall j \in \{1, \dots,
n\}$). Now the problem can be stated as the following:
\begin{problem}\label{problem:ERDH_DOH}
The Identical Risk DOH (IR-DOH) problem:
Given $p_{ij} = p_i$ and for $i=1,\dots k$ and
  $j=1\dots,n$, find partition $S_1\cup\dots\cup S_k$ of
  $\{1,\dots,n\}$ that maximizes $ES=\sum_{i=1}^kES_i$,
  where $ES_i=\sum_{j\in S_i}v_j p_i^{|S_i|}$.
\end{problem}

\begin{theorem}\label{th:erdh_doh}
Assuming that objects are ordered by decreasing values ($v_j$) and
handlers are ordered by decreasing survival probabilities($p_i$), each
handler receives a contiguous subsequence of objects.
\end{theorem}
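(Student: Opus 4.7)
The plan is to use an exchange argument that leverages the factorization $ES_i = q_i V_i$, where $q_i := p_i^{|S_i|}$ and $V_i := \sum_{j \in S_i} v_j$. I would fix any optimal allocation $S^*$ and observe that the per-handler sizes $n_i := |S^*_i|$, and hence the ``effective probabilities'' $q_i$, are determined by $S^*$; in particular they are unchanged under any swap of two objects between different handlers, since such a swap preserves $|S_{i_1}|$ and $|S_{i_2}|$.

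Next, I would consider any two objects $j_1 < j_2$ (so $v_{j_1} \geq v_{j_2}$ under the assumed decreasing-value ordering) that lie in distinct handlers $i_1$ and $i_2$. Swapping them preserves $n_{i_1}$ and $n_{i_2}$, and hence $q_{i_1}$ and $q_{i_2}$, while altering only the sums $V_{i_1}, V_{i_2}$ by $\pm (v_{j_2} - v_{j_1})$. A direct computation then yields $\Delta ES = (q_{i_1} - q_{i_2})(v_{j_2} - v_{j_1})$. Since $S^*$ is optimal, $\Delta ES \leq 0$; combined with $v_{j_2} \leq v_{j_1}$ this forces $q_{i_1} \geq q_{i_2}$. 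Hence, in any optimal allocation, whenever two objects straddle two distinct handlers, the handler receiving the higher-value object has the (weakly) larger effective probability $q$.

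Finally, I would promote this no-inversion property to contiguity. Reorder the handlers in decreasing order of $q_i$, breaking ties arbitrarily. The no-inversion property immediately implies that the top $n_{(1)}$ highest-valued objects all go to the handler with the largest $q$, the next $n_{(2)}$ go to the handler with the next largest $q$, and so on, so that each handler's allocation is a contiguous block of consecutive indices in the value ordering. The only subtlety -- and the main obstacle I anticipate -- is the tie case $q_{i_1} = q_{i_2}$, where the swap is $ES$-neutral; there one cannot argue that \emph{every} optimal allocation is contiguous, only that \emph{some} optimal allocation is, after a cost-neutral rearrangement. This existence statement is exactly what is needed to support a dynamic program analogous to the $O(n^2 k)$ recursion that followed Theorem~\ref{th:ivoih_doh}.
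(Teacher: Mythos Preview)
Your proposal is correct and takes essentially the same approach as the paper: both rest on the factorization $ES_i = q_i V_i$ with $q_i = p_i^{|S_i|}$, observe that a swap of two objects between handlers leaves the $q_i$ unchanged, and compute $\Delta ES = (q_{i_1}-q_{i_2})(v_{j_2}-v_{j_1})$ to obtain the no-inversion property. Your handling of the tie case---concluding only that \emph{some} optimal allocation is contiguous---is in fact more careful than the paper's, and you are right that existence is all that is needed to justify the ensuing $O(n^2k)$ dynamic program.
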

\begin{proof}
Let $v_1 \geq v_2 \geq \dots \geq v_n$. Consider the allocation given
by $S = \langle S_1, \dots, S_k \rangle$ where without loss of generality,
$p_1^{|S_1|} \geq p_2^{|S_2|} \geq \dots \geq p_k^{|S_k|}$. Let $S_i$
be the first handler with a non-contiguous allocation -- i.e., object
$l \in S_i$ but object $m \in S_{i+1}$ (where $v_l \leq v_m$). Now,
consider the allocation given by $S^* = \langle S_1, \dots, S^*_i,
S^*_{i+1}, \dots, S_k \rangle$ where object $l \in S^*_{i+1}$ and object $m
\in S^*_i$ -- i.e., the allocation where the handlers of object $l$
and $m$ are swapped. Now, observe that:
\begin{align*}
\left[ES_{i} + ES_{{i+1}}\right]- \left[ES^*_{i} + ES^*_{{i+1}}\right] = \\
\left[p_i^{|S_i|}V_{i} + p_{i+1}^{|S_{i+1}|}V_{{i+1}} \right]-\left[p_i^{|S_i|}(V_{i} + v_m - v_l) + p_{i+1}^{|S_{i+1}|}(V_{{i+1}} + v_l - v_m) \right] = \\  
\left[p_{i}^{|S_{i}|}(v_m - v_l) - p_{i+1}^{|S_{i+1}|}(v_l - v_m) \right] \leq 0. 
\end{align*}

Therefore, the expected survival value of a contiguous allocation is
always at-least as good as any non-contiguous allocation.
\end{proof}

Since the optimal allocation of objects to handlers is contiguous
(given objects sorted by their values), the following recursive
relation may be used to find the optimal allocation.
\begin{equation}\label{eq:rec_sp_case1}
OPT(i,j) = \max_{1 \leq l \leq i} \left[  p_j^{i - l} \sum_{m=l}^i v_m + OPT(l,j-1)\right]
\end{equation}
where $OPT(i,0) = \infty$ and $OPT(n,k)$ returns the maximum expected
survival value for $n$ objects allocated to $k$ handlers in $O(n^2k)$
time.

\subsection{Identical Risks, Values and Handlers}\label{subsec:special_cases:all_identical}
Under the assumption that all objects and handlers are identical, \emph{w.l.o.g.,} we have $v_j = 1$ and $p_{ij} = p$ ($\forall i \in \{1, \dots, k\}$ and $\forall j \in \{1, \dots, n\}$). In this case, the generic \emph{DOH} problem may be re-stated as:

\begin{problem}\label{problem:ae_DOH}
The Identical Objects and Identical Handlers DOH (IOIH-DOH) problem: 
Given $p_{ij} = p$ and $v_j = 1$ for $i=1,\dots k$ and
  $j=1\dots,n$, find partition $S_1\cup\dots\cup S_k$ of
  $\{1,\dots,n\}$ that maximizes $ES=\sum_{i=1}^kES_i$,
  where $ES_i=|S_i|p^{|S_i|}$.
\end{problem}

We make the following observations about $F(x) = xp^x$, where $x \in \mathbf{Z} ^+$:
\begin{itemize}
\item[O1:] For any $p$ such that $0<p<1$, there exists an $x_1$ such that $F(x_1) \geq F(x_1 + \alpha)$ and $F(x_1) \geq F(x_1 - \alpha)$, where $\alpha \in \mathbf{Z} ^+$ and there exists an $x_2$ such that $F'(x_2) \leq F'(x_2 + \beta)$ and $F'(x_2) \leq F'(x_2 - \beta)$, where $\beta \in \mathbf{Z} ^+$.
\item[O2:] $\forall x \in \{0, \dots, x_1-1\}$: $F(x) > 0$, $F'(x) > 0$, and $F''(x) < 0$.
\item[O3:] $\forall x \in \{x_1+1, \dots, \infty\}$: $F(x) > 0$ and $F'(x) \leq 0$.
\item[O4:]$\forall x \in \{x_1+1, \dots, x_2\}$: $F''(x) \leq 0$ and $\forall x \in \{x_2+1, \dots, \infty\}$: $F''(x) \geq 0$.
\end{itemize}

\begin{algorithm}
\caption{IOIH-DOH Solver}
\label{alg:ae_DOH}
\begin{algorithmic}
\If{$x_1 > n$}
	\State $x_1 \gets n$.
\EndIf
\If{$n \leq kx_1$}
	\State $OPT \gets \langle \lfloor \frac{n}{k} \rfloor, \dots, \lfloor \frac{n}{k} \rfloor, \lceil \frac{n}{k} \rceil, \dots, \lceil \frac{n}{k} \rceil \rangle$ such that $\sum_{i=1}^k OPT_i = n$.
\EndIf

\If{$n > kx_1$}
	\State $O \gets \langle x_1, \dots, x_1, n-(k-1)x_1\rangle$
	\For{$i=1 \to k-1$}
		\If{$\left[F(O_k-1) - F(O_k)\right] > \left[F(O_i) - F(O_i + 1)\right]$}
			\State $O_k \gets O_k-1$
			\State $O_i \gets O_i + 1$
		\EndIf
	\EndFor
	\State $OPT \gets O$
\EndIf
\State \Return $OPT$
\end{algorithmic}
\end{algorithm}

\begin{lemma}\label{lemma:ae_DOH_1}
In any optimal allocation for problem \ref{problem:ae_DOH} with $n\leq
kx_1$, the objects are split amongst handlers as evenly as possible.
\end{lemma}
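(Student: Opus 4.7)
The plan is to prove the lemma by a two-stage exchange argument, relying entirely on the shape properties of $F(x)=xp^x$ recorded in observations O1--O3.

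First, I will establish a preliminary structural claim: in any optimal allocation with $n\leq kx_1$, every handler receives at most $x_1$ objects. Suppose for contradiction there is an optimal allocation $O=\langle o_1,\dots,o_k\rangle$ with some $o_i>x_1$. Since $\sum_j o_j = n\leq kx_1$, the pigeonhole principle forces some $o_j<x_1$. Moving one object from handler $i$ to handler $j$ changes the objective by
\[
[F(o_i-1)-F(o_i)] \;+\; [F(o_j+1)-F(o_j)].
\]
By O3, the first bracket is nonnegative (since $F'\leq 0$ beyond $x_1$), and by O2 the second bracket is strictly positive (since $F'>0$ on $\{0,\dots,x_1-1\}$). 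This contradicts optimality, so all $o_i\leq x_1$.

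Next, working in the regime where every $o_i\leq x_1$, I will use the discrete concavity of $F$ on this range (guaranteed by $F''<0$ in O2) and an exchange argument to force the allocation to be as even as possible. Suppose some optimal allocation has two handlers with $o_i - o_j\geq 2$. Transferring one object from $i$ to $j$ changes the objective by
\[
[F(o_i-1)-F(o_i)] \;+\; [F(o_j+1)-F(o_j)] \;=\; \bigl[F(o_j+1)-F(o_j)\bigr] - \bigl[F(o_i)-F(o_i-1)\bigr].
\]
Because $o_j+1\leq o_i-1\leq x_1$, both marginal increments lie in the strictly concave region, and strict concavity implies $F(o_j+1)-F(o_j) > F(o_i)-F(o_i-1)$. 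The transfer therefore strictly increases $ES$, contradicting optimality. Hence every pair of handlers differs in load by at most one, which is exactly the ``split as evenly as possible'' property.

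The main obstacle is keeping the exchange arguments inside the ranges where O2 and O3 apply with the right strictness: one must first bound all $o_i$ above by $x_1$ (to avoid accidentally trying to use concavity on the non-concave region past $x_1$) and then, in the second exchange, use $o_i-o_j\geq 2$ to ensure $o_j+1\leq o_i-1$ so that the discrete second difference inequality $F(x+1)-F(x) > F(x+2)-F(x+1)$ applies strictly. Once these boundary conditions are handled, the argument is a direct application of the properties already catalogued in O2--O3.
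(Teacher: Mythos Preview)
Your proof is correct and follows the same exchange-argument strategy as the paper, using observations O2 and O3. In fact, your argument is more complete: the paper's proof only carries out your Stage~1 (showing that an allocation with some $o_i>x_1$ and some $o_j<x_1$ can be improved), and stops there without ever invoking concavity to rule out uneven allocations with all $o_i\leq x_1$. Your Stage~2, which uses $F''<0$ on $\{0,\dots,x_1-1\}$ to force any two loads to differ by at most one, fills exactly the gap the paper leaves implicit. The decomposition into the two stages and the care you take with the boundary conditions ($o_j+1\leq o_i-1\leq x_1-1$) are both appropriate and necessary.
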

\begin{proof}
Consider the allocation\footnote{The vector $O = \langle o_1, \dots,
  o_k\rangle$ denotes an allocation where $o_i$ is the size of the set
  containing all the objects allocated to the $i^{th}$ handler --
  i.e., $o_i = |S_i|$.} $O = \langle o_1, \dots, o_k \rangle$ with
$|o_i| > x_1$ and $|o_j| < x_1$. We have $ES_i + ES_j = F(x_1 +
\alpha) + F(x_1 - \beta)$. By moving a single object from $o_i$ to
$o_j$, we get the allocation $O^*$ where $ES^*_i + ES^*_j = F(x_1 +
\alpha - 1) + F(x_1 + \beta + 1)$. By O2 and O3, $[ES^*_i - ES_i] +
       [ES^*_j - ES_j] > 0$. Therefore, $O^*$ is strictly better than
       $O$.
\end{proof}

\begin{lemma}\label{lemma:ae_DOH_2}
In any optimal allocation for problem \ref{problem:ae_DOH} with $n >
kx_1$, exactly $1$ handler is allocated more than $x_1 + 1$ objects.
\end{lemma}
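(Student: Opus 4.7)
The plan is to specialize the exchange argument of Lemma \ref{lemma:runaway} to the all-identical setting: since every handler shares the same probability $p$, the per-handler turning points collapse to common values $x_1$ and $x_2$, and the hypothesis $n > kx_1$ is exactly the specialization of $n > \sum_i x_{i,1}$. I would suppose for contradiction that some optimal allocation $O = \langle o_1, \dots, o_k\rangle$ has two distinct handlers $i$ and $j$ with $o_i = x_1 + \alpha$ and $o_j = x_1 + \beta$ both strictly greater than $x_1$, taking WLOG $\alpha \geq \beta \geq 1$, and then construct a single-object transfer between $i$ and $j$ that strictly increases $F(o_i) + F(o_j)$, contradicting optimality.

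The two candidate transfers (one object from $j$ to $i$ and vice versa) produce changes in the objective equal to the discrete analogues of the quantities (\ref{eq:diffp_runaway1}) and (\ref{eq:diffp_runaway2}) from the proof of Lemma \ref{lemma:runaway}, specialized to $F_i = F_j = F$. The structural input I would use is the combination of observations O3 and O4: on the integer half-line $\{x_1, x_1+1, \dots\}$, the discrete derivative of $F$ is nonpositive (O3), nonincreasing on $\{x_1, \dots, x_2\}$, and nondecreasing on $\{x_2, x_2+1, \dots\}$ with limit zero (O4). In other words, $F'$ restricted to $[x_1, \infty)$ is bowl-shaped with a unique minimum near $x_2$. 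Exactly as in the proof of Lemma \ref{lemma:runaway}, this yields, for any $\beta$, a crossover index $\gamma$ satisfying $F'(x_1 + \gamma - 1) > F'(x_1 + \beta) \geq F'(x_1 + \gamma)$, after which the case split $\alpha - 1 > \gamma$ versus $\alpha - 1 \leq \gamma$ identifies one of the two candidate transfers as strictly improving.

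To pin down the count to \emph{exactly} one, I would combine this with a pigeonhole step: because $n > kx_1$, at least one handler in any allocation must receive more than $x_1$ objects, so the optimum has at least one, and the exchange argument rules out two or more. Corollary \ref{cor:runaway}, specialized, then forces the sizes of the remaining handlers to be $x_1$ each.

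The main obstacle is the identical-handler degeneracy. Because $F_i$ and $F_j$ are literally the same function rather than two distinct curves crossing transversally, the crossover step from Lemma \ref{lemma:runaway} must be recast purely in terms of the single unimodal curve $F'$, and I must verify that the relevant inequalities remain strict in edge cases such as $\alpha = \beta$ or $\alpha = \beta + 1$ where one of the two candidate transfers collapses into a symmetric swap. This reduces to checking that the discrete second difference of $xp^x$ is strictly nonzero away from its inflection point, which holds for every $p \in (0,1)$ since $xp^x$ has no locally affine segment on the positive integers.
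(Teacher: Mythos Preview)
Your single-object exchange does not survive the specialization to identical handlers, and the obstacle you flag is fatal rather than cosmetic. When $F_i=F_j=F$ and both overloaded handlers carry loads in the strictly concave stretch of $F$ between $x_1$ and $x_2$ (observation O4), moving one object in either direction can \emph{decrease} the pair sum: in the symmetric case $\alpha=\beta$ the change equals the second difference $F(x_1+\alpha+1)+F(x_1+\alpha-1)-2F(x_1+\alpha)$, and strict concavity makes this negative, not merely nonzero. Your proposed fix checks only that the second difference does not vanish, but it is the sign that matters here. Concretely, take $p=0.7$ (so $x_1=3$) and two handlers each holding $5>x_1+1$ objects: the pair sum $2F(5)\approx 1.6807$ strictly exceeds $F(4)+F(6)\approx 1.6663$, so every single-object transfer away from $(5,5)$ is strictly harmful. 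The crossover-index machinery of Lemma~\ref{lemma:runaway} relied on $F'_i$ and $F'_j$ being \emph{distinct} curves that cross transversally; once they coincide, there is no $\gamma$ with the required strict inequalities throughout the concave regime, and the case split $\alpha-1>\gamma$ versus $\alpha-1\le\gamma$ no longer guarantees an improving direction.

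The paper does not attempt to repair the single-object exchange. Its proof moves objects in bulk---either $\beta$ or $\beta+1$ at once---so as to drive the lighter overloaded handler all the way down to load $x_1+1$ or $x_1$ in a single step, and then splits cases on whether $F(x_1)-F(x_1+1)$ is smaller or larger than $F(x_1+\alpha+\beta)-F(x_1+\alpha+\beta+1)$. This is a different exchange argument from Lemma~\ref{lemma:runaway}, not a specialization of it, adopted precisely because the concave segment of $F$ defeats any one-step local reshuffle between two identically-behaved overloaded handlers.
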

\begin{proof}
Consider the allocation $O = \langle o_1, \dots, o_k \rangle$ with
$|o_i |\geq |o_j| > x_1 + 1$. We have $ES_i + ES_j = F(x_1 + \alpha) +
F(x_1 + 1 + \beta)$. If $F(x_1) - F(x_1+1) < F(x_1 + \alpha + \beta) -
F(x_1 + \alpha + \beta + 1)$, then we move $\beta$ objects from $o_j$
to $o_i$ to get the allocation $O^*$ where $ES^*_i + ES^*_j = F(x_1 +
\alpha + \beta) + F(x_1 + 1)$. Otherwise, we move $\beta + 1$ objects
from $o_j$ to $o_i$ to get the allocation $\widehat{O}$ where
$\widehat{ES}_i + \widehat{ES}_j = F(x_1 + \alpha + \beta + 1) +
F(x_1)$.  By O3 and O4, we have either $[ES^*_i - ES_i] + [ES^*_j -
  ES_j] > 0$ or $[\widehat{ES}_i - ES_i] + [\widehat{ES}_j - ES_j] >
0$. Therefore, $O$ can never be optimal.

Note: If $x_1 \in \mathbf{R}^+$ (rather than $x_1 \in \mathbf{Z}^+$),
it is always true that $F(x_1) - F(x_1+1) > F(x_1 + \alpha + \beta) -
F(x_1 + \alpha + \beta + 1)$. Therefore, the optimal allocation would
always have exactly $k-1$ handlers with exactly $x_1$ objects
allocated to each of them.
\end{proof}

\begin{theorem}\label{th:ae_DOH_optimal}
Algorithm \ref{alg:ae_DOH} always finds an optimal allocation for problem \ref{problem:ae_DOH}.
\end{theorem}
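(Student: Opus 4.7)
The proof proceeds by case analysis on which branch of Algorithm~\ref{alg:ae_DOH} is executed; the preprocessing step $x_1 \gets n$ when $x_1 > n$ guarantees that exactly one of the two main branches applies.

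\textbf{Case 1 ($n \leq k x_1$).} I would invoke Lemma~\ref{lemma:ae_DOH_1}. The exchange argument of that lemma in fact shows slightly more: whenever two handler counts differ by two or more, rebalancing them by one object strictly increases ES (this uses O2, i.e., the concavity of $F$ on $[0, x_1]$). Iterating this until no such pair remains forces every handler count into $\{\lfloor n/k \rfloor, \lceil n/k \rceil\}$. Since the handlers are symmetric, such an allocation is unique up to permutation and matches exactly what the algorithm returns in this branch.

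\textbf{Case 2 ($n > k x_1$).} By Lemma~\ref{lemma:ae_DOH_2} there is exactly one handler (the \emph{runaway}) with strictly more than $x_1 + 1$ objects, which by symmetry I label as handler $k$. A further short exchange argument using O2 shows that no non-runaway handler can hold fewer than $x_1$ objects, for otherwise shifting a single object in from the runaway would strictly increase ES. Hence each of the remaining $k-1$ handlers receives either $x_1$ or $x_1 + 1$ objects. Letting $j$ be the number of non-runaway handlers that hold $x_1 + 1$ objects, setting $c = F(x_1) - F(x_1+1) \geq 0$, and $m_0 = n - (k-1) x_1$, the ES becomes
\[
G(j) \;=\; (k-1) F(x_1) \;-\; j c \;+\; F(m_0 - j).
\]
The $i^{\text{th}}$ iteration of the algorithm's loop increments $j$ precisely when $G(j+1) - G(j) > 0$ at the current $j$, since the guard $F(O_k - 1) - F(O_k) > F(O_i) - F(O_i + 1)$ unfolds to exactly $[F(m_0 - j - 1) - F(m_0 - j)] > c$. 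The greedy loop thus produces the optimizer of $G$ provided the marginal $G(j+1) - G(j)$ changes sign at most once on $\{0, 1, \ldots, k-1\}$.

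The principal technical obstacle is establishing this one-sign-change property. By O4, the first differences of $F$ past $x_1$ are non-decreasing in their argument on the concave region $(x_1, x_2]$ and non-increasing on the convex region $(x_2, \infty)$, so as a function of $j$ the quantity $F(m_0 - j - 1) - F(m_0 - j)$ is in general unimodal. The plan is to split into the sub-cases $m_0 \leq x_2$ and $m_0 > x_2$: in the former the marginal is monotone decreasing in $j$ throughout, and the greedy trivially halts at the unique sign change; in the latter one must exploit the explicit form $F(x) = x p^x$ together with the boundary value $F(x_1) - F(x_1+1) = c$ to argue that the marginal still crosses $0$ at most once on the feasible range, so the single-pass greedy loop stops at the correct index. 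Granted this monotonicity, the greedy output maximizes $G(j)$, completing Case~2 and hence the theorem.
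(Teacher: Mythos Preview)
Your approach is the same as the paper's: split on $n \le kx_1$ versus $n > kx_1$ and invoke Lemmas~\ref{lemma:ae_DOH_1} and~\ref{lemma:ae_DOH_2}. The paper's own proof is extremely terse---it literally just cites the two lemmas---so in both cases you actually go further than the paper in spelling out what must be checked. In Case~1 you correctly observe that the exchange argument in Lemma~\ref{lemma:ae_DOH_1} (as written) only rules out one handler above $x_1$ and another below, and that concavity on $[0,x_1]$ is what forces the full ``as evenly as possible'' conclusion; the paper does not make this explicit. In Case~2 you correctly add the sub-claim that no non-runaway handler can sit strictly below $x_1$, reduce the remaining choice to the single integer $j$, and reformulate the loop as a greedy on $G(j)$; again the paper does none of this.

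Two remarks. First, a small sign slip: on the concave region $(x_1,x_2]$ the forward differences $F(x{+}1)-F(x)$ are non-\emph{increasing} (and non-\emph{decreasing} on the convex tail), the opposite of what you wrote; your subsequent conclusion that $j\mapsto F(m_0-j-1)-F(m_0-j)$ is unimodal is nonetheless correct once the labels are fixed. Second, the ``principal technical obstacle'' you flag is genuine and is \emph{not} addressed in the paper either: when $m_0>x_2$ the marginal $G(j{+}1)-G(j)$ can start negative, rise above~$0$, and fall again, so a one-pass greedy that stops at the first non-improvement is not obviously optimal. Your plan (use the explicit form $F(x)=xp^x$ and the boundary value $c=F(x_1)-F(x_1{+}1)$ to bound where the sign change can occur relative to the feasible range $\{0,\dots,k-1\}$) is the right idea, but as you note it remains to be carried out; the paper simply asserts optimality ``by Lemma~\ref{lemma:ae_DOH_2}'' without touching this point.
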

\begin{proof}
In the case where $n \leq kx_1$, algorithm \ref{alg:ae_DOH} returns
the allocation where objects are split as evenly as possible amongst
the handlers. By lemma \ref{lemma:ae_DOH_1}, this is optimal.

In the case where $n > kx_1$, algorithm \ref{alg:ae_DOH} first
allocates exactly $x_1$ objects to each of $k-1$ handlers and the
remaining objects to the $k^{th}$ handler. Then, if we have $F(x_1) -
F(x_1 + 1) < F(O_k -1) - F(O_k)$, we move one object from the $k^{th}$
handler to one of the other $k-1$. This shifting process is repeated
until the condition is no longer true (at-most $k-1$ iterations). By
lemma \ref{lemma:ae_DOH_2}, this is optimal.
\end{proof}

\subsection{Exactly One Object per Handler}\label{subsec:special_cases:EOVA}
In this section we consider the original formulation defined in
problem \ref{problem:DOH}, this time with the restriction that
handlers may be allocated only exactly one object. We will show that
under this assumption, the problem can be stated as an instance of the
transportation problem (a special case of the min cost - max flow
problem with a linear objective function) \cite{tp_3}. This special
case can be stated as the following:
\begin{problem}\label{problem:exactone}
The Exactly One DOH (EO-DOH) problem:
 Given $v_1,\dots,v_n$ and $p_{ij}$ for $i=1,\dots k$ and
  $j=1\dots,n$, find partition $S_1\cup\dots\cup S_k$ of
  $\{1,\dots,n\}$ that maximizes $ES=\sum_{i=1}^kES_i$,
  where $ES_i=\sum_{j\in S_i}v_j\prod_{j\in S_i} p_{ij}$
  and $|S_i| = 1$ for $i = 1, \dots, k$.
\end{problem}

Since $|S_i| = 1$, the following identities are observed: (1)
$\prod_{j \in S_i}p_{ij} = \sum_{j \in S_i} p_{ij}$ and (2) $(\sum_{j
  \in S_i} p_{ij}) (\sum_{j \in S_i}v_j) = \sum_{j \in
  S_i}p_{ij}v_j$. These identities allow us to convert problem
\ref{problem:exactone} into the instance of an (unbalanced)
transportation problem \cite{tp_1}, \cite{tp_2} defined in problem
\ref{problem:Transportation}.
\begin{problem}\label{problem:Transportation}
The Unbalanced Transportation Problem:
 Given $v_1,\dots,v_n$ and $p_{ij}$ for $i=1,\dots k$ and
  $j=1\dots,n$, find partition $S_1\cup\dots\cup S_k$ of
  $\{1,\dots,n\}$ that maximizes $ES=\sum_{i=1}^kES_i$,
  where $ES_i=\sum_{j\in S_i}v_j p_{ij}$.
\end{problem}

Problem \ref{problem:Transportation} is a well studied problem with
efficient methods for obtaining solutions based on successive
shortest-path generalizations \cite{nf_1_alg}, minimum mean cycle
cancelling \cite{nf_3_alg}, and poly-time network simplex
methods\cite{nf_2_alg}.

\section{Heuristics}\label{sec:heuristics}
In this section we propose and evaluate five heuristics for finding solutions to DOH problems. The heuristics are based on the MMR algorithm (Algorithm \ref{alg:mmr}), dynamic programming approach (Section \ref{sec:approximations}), and genetic algoritms. A performance comparison of the heuristics is provided in Table \ref{tab:heuristics}. 

\paragraph{Ordered MMR for DOH Problems (O-MMR): }In this heuristic, objects are ordered by decreasing values before the standard MMR greedy approach (Section \ref{sec:special_cases}) is used as usual. Experimental analysis reveals that in most cases, such an ordering (usually) performs better than MMR where objects are either unordered, or ordered by increasing values.

\paragraph{Clairvoyant MMR for DOH Problems (C-MMR): }The following variation is made to the O-MMR algorithm: If there is no handler to which the current object can be allocated without causing a drop in the cumulative expected survival value, then that object is placed in a {dumpster} handler. After initial allocation of all objects is complete, all objects in the dumpster are allocated together (as a single object) to the one handler that experiences the smallest loss in ES value.

\begin{algorithm}[h]
\caption{The MMR, OMMR, and CMMR Algorithms} \label{alg:mmr}
\small
\begin{algorithmic}
\Function{MMR}{$n,k,v_1, \dots, v_n, p_{11}, \dots, p_{kn},type$}\Comment{$type \gets 1$ for OMMR, $2$ for CMMR, $0$ otherwise}
\For{$i = 1 \to k$}
	\State $S_i \gets \emptyset$, $ES_i \gets 0$, $S_{dumpster} \gets \emptyset$
\EndFor
\If{$type \geq 1$}
	\State $\Call{sort}{v_1, \dots, v_n, p_{11}, \dots, p{kn}}$\Comment{sort values and re-arrange probabilities accordingly}
\EndIf
\For{$i = 1 \to n$}
	\For{$j = 1 \to k$}
		\State $T_j \gets S_j \bigcup i$, $\delta_j \gets \Call{Compute-ES}{T_j} - ES_j$
	\EndFor
	\State $m \gets \arg\max \{\delta_1, \dots, \delta_k\}$
	\If {$cmmr = 1 \land \delta_m < 0$}
		\State $S_{dumpster} \gets S_{dumpster} \bigcup i$
	\Else
		\State $S_m \gets S_m \bigcup i$, $ES_m \gets \Call{Compute-ES}{S_m}$
	\EndIf
\EndFor
\If{$type = 2$}
\For{$j = 1 \to k$}
	\State $T_j \gets S_j \bigcup S_{dumpster}$, $\delta_j \gets \Call{Compute-ES}{T_j} - ES_j$
\EndFor
\State $m \gets \arg\max \{\delta_1, \dots, \delta_k\}$
\EndIf
\State \Return $\langle S_1, \dots, S_k \rangle$
\EndFunction
\end{algorithmic}
\end{algorithm}

\paragraph{Dynamic Programming Based Heuristic (DP-H): }The DP-H algorithm is a variation of the FPTAS presented in Section \ref{sec:approximations}. The major difference is that the DP-H algorithm maintains exactly $n$ states for each iteration of the dynamic program. In each iteration, the $n$ {most promising} states are maintained while the remaining are culled. To do this, we divide the state space into $n$ uniformly sized blocks and maintain at-most $n$ states for each iteration of the dynamic program -- i.e., the state with the highest ES for each block. Therefore, in each iteration, no more than $n$ \emph{promising} states are maintained while the remaining are culled. Finally, after all $n$ accounts are allocated, the state with the maximum ES is returned. The algorithm is illustrated in algorithm \ref{alg:dp_h}.

\begin{algorithm}
  \caption{DP-H trimming procedure}
  \label{alg:dp_h}
  \begin{algorithmic}
    \Function{Trim}{$\Phi$} \Comment{\textsc{Trim} function for computing DP-H solutions}
	    \ForAll{$(V_1,\dots,V_k,P_1,\dots,P_k) \in \Phi$}
	      \State Find $\langle {r_1}, \dots, {r_{k}} \rangle$ such that $\frac{(r_i+1)V_{max}}{n^{1/k}}\leq V_i \leq \frac{r_iV_{max}}{n^{1/k}}$, $\forall i \in \{1, \dots, k\}$
	      \State Find $\langle s_{1}, \dots, {s_{k}} \rangle$ such that $\frac{(s_i+1)P_{max}}{n^{1/k}}\leq P_i \leq \frac{s_iP_{max}}{n^{1/k}}$, $\forall i \in \{1, \dots, k\}$
	      \State $CurrES \gets \Call{Compute-ES}{V_1,\dots,V_k,P_1,\dots,P_k}$
	      \If{$Orthotope[\langle r_1, \dots, r_{k}, s_1,\dots,s_k \rangle] = \perp$}
		      \State $Orthotope[\langle r_1, \dots, r_{k}, s_1,\dots,s_k \rangle] \gets (V_1,\dots,V_k,P_1,\dots,P_k)$
		      \State $ESofOrthotope[\langle r_1, \dots, r_{k}, s_1,\dots,s_k \rangle] \gets CurrES$
		  \EndIf
		  
		  \If{$ESofOrthotope[\langle r_1, \dots, r_{k}, s_1,\dots,s_k \rangle] < CurrES$}
		  		      \State $Orthotope[\langle r_1, \dots, r_{k}, s_1,\dots,s_k \rangle] \gets (V_1,\dots,V_k,P_1,\dots,P_k)$
		  		      \State $ESofOrthotope[\langle r_1, \dots, r_{k}, s_1,\dots,s_k \rangle] \gets CurrES$
	      \EndIf
      \EndFor
      \State \Return $Orthotope$
    \EndFunction
  \end{algorithmic}
\end{algorithm}

\paragraph{Random Initialization with Complete Genetic Inheritance (RI-GA): }We refer to each DOH candidate solution as an individual and each single object-to-handler assignment as a chromosome of that individual. The fitness function used to evaluate an individual is simply function that computes its corresponding ES value. In each iteration, the fitness of all individuals is evaluated and a fixed number of {couples} are selected based on the standard roulette wheel selection \cite{RWS}. After the mutation procedure, each couple is replaced with their offspring.

In the RI-GA algorithm, the initial population is randomly generated. The child $(c)$ of each couple $(p_1, p_2)$ is created by inheriting every chromosome from one of $p_1$ or $p_2$. The most profitable non-overlapping handlers (set of chromosomes) of $p_1$ and $p_2$ are inherited in decreasing order. Finally, if no more chromosome sets may be completely inherited and the child is still incomplete, the child inherits each of its remaining chromosomes from parent $p_i$ with probability $\frac{fitness(p_i)}{(fitness(p_1) + fitness(p_2))}$. 

\paragraph{Heuristic Based Initialization with Partial Genetic Inheritance (HI-GA): }In the HI-GA algorithm, the initial population is seeded as follows: $60\%$ are randomly generated, $15\%$ are populated by the allocation generated by O-MMR, $15\%$ are populated by the allocation generated by C-MMR, and the final $10\%$ are populated by the allocation generated by DP-H. The mutation procedure is similar to RI-GA, except if no more chromosome sets may be completely inherited and the child is still incomplete, the child spawns randomly generated chromosomes (which may completely differ from either parent).

In the evaluation of the RI-GA and HI-GA algorithms, the initial population was set to be 1000. Further, in each generation, 1000 pairs of individuals were chosen to produce 1000 children (the population was always exactly 1000). The fittest child after 1000 generations was returned as the solution.
\begin{table}
\centering
\begin{tabular}{|c|c|c|c|c|c|c|c|c|c|c|c|}
\hline
\multirow{2}{*}{\emph{n}} & \multirow{2}{*}{{\emph{k}}} & \multicolumn{2}{c|}{{O-MMR}} & \multicolumn{2}{c|}{{C-MMR}} & \multicolumn{2}{c|}{{DP-H}}& \multicolumn{2}{c|}{{RI-GA}}& \multicolumn{2}{c|}{{HI-GA}}\\ 
\cline{3-12}
 & & $\mu$ & $\sigma$ & $\mu$ & $\sigma$ &$\mu$ & $\sigma$ &$\mu$ & $\sigma$ &$\mu$ & $\sigma$ \\
 \hline
 \multirow{2}{*}{25} & 5 & .398 & .070 & .380 & .066 & .398 & .069 & .448 & .061 & .464 & .056 \\
 \cline{2-12}
 & 10 & .759 & .044 & .767 & .050 & .759 & .045 & .705 & .048 & .767 & .044 \\
 \hline
 \multirow{2}{*}{50} & 10 & .585 & .047 & .561 & .046 & .585 & .046 & .459 & .047 & .588 & .044 \\
 \cline{2-12}
 & 25 & .908 & .013 & .910 & .013 & .908 & .015 & .685 & .038 & .910 & .036 \\
 \hline
 \multirow{2}{*}{100} & 25 & .832 & .019 & .829 & .024 & .831 & .020 & .445 & .043 & .831 & .019 \\
 \cline{2-12}
 & 50 & .950 & .006 & .950 & .006 & .941 & .006 & .559 & .030 & .950 & .006 \\
 \hline
 \multirow{3}{*}{250} & 25 & .654 & .021 & .639 & .022 & .654 & .021 & .121 & .029 & .681 & .028 \\
 \cline{2-12}
 & 50 & .890 & .008 & .886 & .009 & .889 & .008 & .221 & .027 & .891 & .008 \\
 \cline{2-12}
 & 100 & .969 & .002 & .970 & .002 & .969 & .002 & .348 & .022 & .970 & .022 \\
\hline
\end{tabular}
\caption{Mean ($\mu$) and Standard Deviation ($\sigma$) of the ratio between the ES obtained by the heuristic and the upper-bound ES value for varying $n$ and $k$ (50 trials per experiment).}
\vspace{-.45in}
\label{tab:heuristics}
\end{table}

\section{Conclusions and Future Work}\label{sec:conclusions}

The Destructive Object Handler problem describes many real-world
situations in which dangerous objects must be shipped, processed,
quarantined, or otherwise handled, and the destruction of one object
assigned to a handler destroys all the other objects assigned to the
same handler.  We have shown that this problem is NP-complete in
general, but have provided an FPTAS when the number of handlers is
constant and polynomial time algorithms for numerous special cases.  We
have evaluated several heuristics based on simple greedy strategies, dynamic programming, and genetic algorithms. It appears that the heuristics perform poorly when the ratio of objects to handlers is high, but improve as this ratio reduces (even for very large problem sizes).

There remain many open avenues for research on the DOH problem, particularly in understanding its complexity. The hardness of the generic {DOH} problem is still not fully
known -- i.e., it remains unknown if the {DOH} problem is
solvable by a pseudo-polynomial time algorithm or if there exists a
reduction that proves its strong
$\mathsf{NP}$-$\mathsf{Completeness}$.  We also have not studied the
special case when all objects have identical values.  Further, it is
not known if efficient constant factor or $(1- \epsilon)$
approximations exist when $k$ is not a constant.

\bibliographystyle{splncs} 
\bibliography{optimalbib}

\appendix

\end{document}